\newcommand{\z}{\bm{z}}
\newcommand{\Z}{\mathcal{Z}}
\newcommand{\E}{\mathbb{E}}
\renewcommand{\P}{\mathbb{P}}
\newcommand{\kp}{k^\prime}
\theoremstyle{plain}
\newtheorem{assumption}{Assumption}
\newtheorem{proposition}{Proposition}
\newtheorem{remark}{Remark}
\newcommand{\blind}{0}
\newcommand{\submission}{0}
  \def\myauthor{Matthew Blackwell and Nicole Pashley}
\def\myauthor{}
\definecolor{gray}{rgb}{0.459,0.438,0.471}
\definecolor{crimson}{rgb}{0.6,0,0}
\title{\sffamily\bfseries{Bounds on causal effects in $2^{K}$ factorial experiments with non-compliance}%
  \if0\blind{}\thanks{Thanks to Zach Branson for helpful comments. Working paper. Comments welcome.}\fi%
}
\author{Matthew Blackwell\thanks{Department of Government and Institute for Quantitative Social Science, Harvard University. web: \mbox{\url{http://www.mattblackwell.org}} email: \texttt{\href{mailto:mblackwell@gov.harvard.edu}{mblackwell@gov.harvard.edu}}} \and Nicole E. Pashley\thanks{Department of Statistics, Rutgers University. email: \texttt{\href{mailto:nicole.pashley@rutgers.edu}{nicole.pashley@rutgers.edu}}}}
\date{\today}
\begin{document}

\maketitle

\begin{abstract}
Factorial experiments are ubiquitous in the social and biomedical sciences, but when units fail to comply with each assigned factors, identification and estimation of the average treatment effects become impossible without strong assumptions.
Leveraging an instrumental variables approach, previous studies have shown how to identify and estimate the causal effect of treatment uptake among respondents who comply with treatment. 
A major caveat is that these identification results rely on strong assumptions on the effect of randomization on treatment uptake.
This paper shows how to bound these complier average treatment effects for bounded outcomes under more mild assumptions on non-compliance.  
\end{abstract}

\baselineskip=1.65\baselineskip
\clearpage

\section{Introduction}
 
Factorial experiments are widespread in the social and biomedical sciences for assessing how different interventions (or different features of an intervention) affect a unit's response. Unfortunately, in many of these experiments, units fail to comply with one or more of the factors, making it difficult to identify and estimate the actual effect of treatment uptake rather than the effect of treatment randomization. Instrumental variables (IV) has been a common approach to estimating causal effects in randomized experiments with non-compliance \citep{AngImbRub96}. The key intuition of this approach is that while the average effect of treatment uptake is unidentified, with some additional assumptions we may be able to identify those effects among \emph{compliers}, or those who would always comply with the treatment assignment they receive.

With a binary treatment, these additional assumptions, often called the instrumental variables (IV) assumptions, can be benign or be ensured to hold through the design of the experiment.
In factorial settings, however, point identification of complier average effects has required a treatment exclusion restriction that restricts the assignment of each factor to only affect that factor \citep{Blackwell17, blackwell2021noncompliance}.
This assumption can be strong as it rules out plausible interactions or spillovers between factors in the first stage.
Further,  \cite{Blackwell17} and \cite{blackwell2021noncompliance} focused on these strong assumptions holding for all factors.
However, in practice researchers often have a focal factor of interest and may only believe or want to assume that behavior with respect to that factor adheres to the assumptions, indicating a more nuanced approach may be appropriate.
In this paper, we derive partial identification bounds on complier average factorial effects under weaker assumptions about non-compliance which are factor specific.
We focus on two assumptions: for the factor of interest, we assume (1)  the existence of a ``least compliant profile'' of other factors; and (2) a weak exclusion restriction that allows for spillovers between factors on treatment uptake.
We use these assumption to bound the main effects of factors among units that always comply with assignment on that factor, whom we call the \emph{constant compliers}. 
These bounds can be informative of the direction of the constant complier main effects.  
Several studies have investigated partial identification bounds in the IV context. \cite{BalPea97} used linear programming to obtain bounds on the overall average treatment effects under the IV assumptions with a binary treatment. \cite{cheng2006bounds} derived bounds on the complier average treatment effect in a three-arm trial under an additional ``compliance monotonicity'' assumption that we generalize to the ``least compliant profile'' assumption for the factorial setting. \cite{song2024instrumental} derive bounds on the joint counterfactual distribution of the outcome when the treatment and outcome are discrete with a finite number of states. 
Our analysis leverages the factorial structure to generate plausible assumptions and valid bounds.  

\section{Setting}
Consider an experiment on $n$ units with $K$ two-level factors, with levels labeled as $-1$ and $+1$, and the potential for noncompliance on any of the factors.
There are $J = 2^K$ possible treatment assignments, the number of combinations of the $K$ factors, which we label $\Z = \{\z_1, \dots, \z_J\}$ where each $\z = (z_1,\dots,z_K) \in \{-1, 1\}^K$.
Let $\bm{Z}_i$ be the treatment assigned to unit $i$.
Further, let $\z_{-k}\in \{-1, 1\}^{K-1}$ denote a treatment assignment less the $k$th factor and similarly $\Z_{-k}$ be the set of possible assignments on $K-1$ factors.
\begin{assumption}[Complete randomization]\label{assump:cr}
We assume assignment to the treatment combinations is completely at random, fixing the number of units assigned to treatment $\z$ as $N_{\z}$ such that $N_{\z} \geq 2$ for all $\z \in \Z$ and $\sum_{\z \in \Z} N_{\z} = N$.
\end{assumption}

See \cite{DasPilRub15} for more on standard factorial designs.
Because of the noncompliance, we define $\bm{D}_i(\z) = (D_{i1}(\z), D_{i2}(\z), \dots, D_{iK}(\z))$ to be the treatment uptake for  unit $i$ on each factor 1 through $K$ if assigned to treatment $\z$ where each $D_{ik}(\z) \in \{-1,1\}$.

Assuming the Stable Unit Treatment Value Assumption (SUTVA) that there are no hidden forms of treatment and no interference between units, each unit has a potential outcome for each combination of assignment and uptake, $Y_i(\z, \bm{d})$.
We assume that $Y_i(\z, \bm{d}) \in [0,1]$, although extensions of our results to other bounded outcomes is straightforward.

\begin{assumption}[Exclusion restriction]\label{assump:er}
For $\z, \z'$ such that $D_i(\z) = D_i(\z')$, $Y_i(\z, D_i(\z)) = Y_i(\z', D_i(\z'))$.
\end{assumption}

Under Assumption~\ref{assump:er}, each unit has only $2^K$ potential outcomes, one for each possible treatment assignment.
Define the average of potential outcomes under treatment $\z$ as $\overline{Y}(\z) = N^{-1}\sum_{i=1}^NY_i(\z)$ and the average uptake value for factor $k$ under treatment $\z$ as $\overline{D}_k(\z) = N^{-1}\sum_{i=1}^ND_{ik}(\z)$.

Throughout, we take there to be a focal factor $k$ of interest that researchers wish to learn a complier average effect for, as opposed to an intent to treat effect.
Identification for interactions and implications for conducting inference for complier average effects on multiple factors are discussed in Supplementary Materials~\ref{sec:int}.

\subsection{Compliance types in factorial experiments}

A compliance type is a complete description of a unit's uptake for every possible treatment assignment, or the set $\{\bm{D}_{i}(\z) \mid \P(\bm{Z}_{i} = \z) > 0\}$. Without any restrictions on compliance, there are $(2^K)^{2^K}$ such types, typically making it difficult to learn anything about the effect of treatment uptake. A common solution to this problem in the binary IV literature is to restrict the compliance types in such a way that we can identify the effect of treatment uptake for some subset of compliance types. One common assumption is monotonicity, which assumes that the effect of treatment assignment on uptake is never negative. \cite{blackwell2021noncompliance} generalized this assumption to the factorial setting by having this hold conditionally for each factor. We write $(\z_{-k}, z_{k}^{+})$ and $(\z_{-k}, z_{k}^{-})$ as the $\z$ vector with the $k$th entry as $+1$ and $-1$, respectively.
\begin{assumption}[Conditional monotonicity for factor $k$]\label{assump:mono}
For each $\z_{-k} \in \mathcal{Z}_{-k}$, $D_{ik}(\z_{-k}, z_{k}^{+}) \geq D_{ik}(\z_{-k}, z_{k}^{-}) $ for all $i$.
\end{assumption}

 A unit's compliance might vary as a function of assignment to the other factors if those other factors can change compliance behavior.
With conditional monotonicity, we can characterize the compliance types for a particular factor $k$ as they vary by assignment on the other factors, $\z_{-k}$, as follows:
Compliers (c) for $\z_{-k}$ have $D_{ik}(\z_{-k}, z_{k}^{+}) = +1$ and $D_{ik}(\z_{-k}, z_{k}^{-}) = -1$; always-takers (a) have $D_{ik}(\z_{-k}, z_{k}^{+}) = D_{ik}(\z_{-k}, z_{k}^{-}) = +1$; and never takers have $D_{ik}(\z_{-k}, z_{k}^{+}) = D_{ik}(\z_{-k}, z_{k}^{-}) = -1$. 
For example, in a $2^2$ experiment, there are nine compliance types for factor 1 denoted by their type when assigned to $+1$ for factor 2 and $-1$ for factor 2: \{cc, ca, cn, ac, aa, an, nc, na, nn\}.

We focus our attention on units that would comply with treatment assignment on factor $k$ no matter the treatment assignment on other factors (those of type cc in the prior example).
We call these units the \emph{constant compliers} for factor $k$ and define them as the units with
$$
D_{ik}(\z_{-k}, z_{k}^{+}) - D_{ik}(\z_{-k}, z_{k}^{-}) = 2, \quad\forall \; \z_{-k} \in \mathcal{Z}_{-k}.
$$
Let $C_{ik} = 1$ if unit $i$ is a constant complier for factor $k$.
Constant compliers are equivalent to marginal compliers in \cite{blackwell2021noncompliance}.
However, in our setting there are individuals who comply with factor $k$ under some $\z_{-k}$ but not others, as discussed below, which is not allowed under the assumptions in \cite{blackwell2021noncompliance}.
Thus, we use the term constant complier to emphasize that these are individuals who always comply with factor $k$.
Define $\rho_{c_{k}} = \frac{1}{N} \sum_{i=1}^{N} C_{{ik}}$ as the proportion of units that are constant compliers for factor $k$ and define the average potential outcome under treatment assignment $\z$ for the constant compliers for factor $k$ as
$$
\overline{Y}_{c_{k}}(\z) = \frac{1}{\rho_{c_{k}}N} \sum_{i=1}^{N} C_{ik}Y_{i}(\z).
$$

We will group other compliance groups together by whether they ever comply with assignment on factor $k$.
For any particular assignment, $\z_{-k}$, we can partition units into constant compliers, those who comply for $\z_{-k}$ but fail to comply for at least one other assignment, and those who do not comply for $\z_{-k}$.
We call the latter two groups the \emph{conditional compliers} for $\z_{-k}$ and the \emph{conditional noncompliers} for $\z_{-k}$, and denote their shares in the population as $\rho_{cc(\z_{-k})}$ and $\rho_{cn(\z_{-k})}$, respectively. 
The notation reflects that these groups are defined based on their behavior with respect to assignment $\z_{-k}$, whereas constant compliers are the group that complies regardless of $\z_{-k}$.
We denote their average potential outcomes as $\overline{Y}_{cc(\z_{-k})}(\z)$ and $\overline{Y}_{cn(\z_{-k})}(\z)$. 

Our identification results focus on the main effect of factor $k$ among constant compliers:
\[\delta_k = \frac{1}{2^{K-1}}\sum_{\bm{z}_{{-k}} \in \mathcal{Z}_{-k}}\left[\overline{Y}_{c_k}(\z_{-k}, z_{k}^{+}) - \overline{Y}_{c_k}(\z_{-k}, z_{k}^{-})\right].\]

\section{Identification of main effects for constant compliers}\label{sec:ident}

What we can learn about the average factorial effects for constant compliers on factor $k$ will depend on the assumptions about compliance that we can maintain. 
\cite{blackwell2021noncompliance} assumed the Treatment Exclusion Restriction, that compliance for one factor does not vary with the assignment of other factors (and assumed this holds for all factors), and obtained point identification. 
Given the strength of that assumption, here we consider weaker assumptions that might be applicable in a wide variety of empirical settings and only need hold for factor $k$.

We begin with a useful decomposition of the intent-to-treat (ITT) effect of factor $k$ conditional on assignment on the other factors, $\z_{-k}$. Recall that the ITT is the effect of assignment on the outcome, ignoring uptake entirely, which we can decompose as
\begin{equation}\label{eq:itt_decomp}
  \begin{aligned}
    \gamma_k(\z_{-k}) &\equiv \overline{Y}(\z_{-k}, z_{k}^{+}) - \overline{Y}(\z_{-k}, z_{k}^{-})
                      = \rho_{c_{k}}\left[\overline{Y}_{c_k}(\z_{-k}, z_{k}^{+}) - \overline{Y}_{c_k}(\z_{-k}, z_{k}^{-})\right] \\ &+ \rho_{cc(\z_{-k})}\left[\overline{Y}_{cc(\z_{-k})}(\z_{-k}, z_{k}^{+}) - \overline{Y}_{cc(\z_{-k})}(\z_{-k}, z_{k}^{-})\right] \\
    &  + \rho_{cn(\z_{-k})}\left[\overline{Y}_{cn(\z_{-k})}(\z_{-k}, z_{k}^{+}) - \overline{Y}_{cn(\z_{-k})}(\z_{-k}, z_{k}^{-})\right].
  \end{aligned}
\end{equation}
We can see that the ITT is a mixture of three effects: (1) the effect for constant compliers, (2) the effect for conditional compliers, and (3) the effect for conditional noncompliers. 
It may seem odd to have the effect for conditional noncompliers, but recall that without further assumptions $Z_{ik}$ may affect compliance on other factors ($D_{ij}$ for $j \neq k$) even when it has no effect on $D_{ik}$.

To bring a more compact notation to this framework, we define a series of vectors that encode which treatments are contrasted for each factorial effect \citep{DasPilRub15, blackwell2021noncompliance}.
We begin by defining $\bm{Y} = (\overline{Y}(\bm{z}_1), \dots, \overline{Y}(\bm{z}_J))^T$, which is the vector of average potential outcomes for different treatment assignments. We then defined a series of $J$-length vectors, $\bm{g}_{j}$, each of which has one half of its entries as $+1$ and the other half as $-1$ to select the appropriate entries of $\bm{Y}$ when we use the product $\bm{g}^{T}_{j}\bm{Y}$.
The vectors $\bm{g}_{1}, \ldots, \bm{g}_{K}$ correspond to the main factorial effects of the $K$ factors.
For example, $\bm{g}_{k}$, for $k \leq K$, encodes the $k$th main effect so any treatment assignment vector $(\z_{-k}, z_{k}^{+})$ has a corresponding $+1$ entry in $\bm{g}_{k}$ and any assignment vector $(\z_{-k}, z_{k}^{-})$ has a corresponding $-1$ entry in $\bm{g}_{k}$.

With this notation in hand, we can write the main factorial ITT for factor $k$ as
$$
\frac{1}{2^{K-1}} \sum_{\z_{-k} \in \mathcal{Z}_{-k}} \gamma_{k}(\z_{-k}) = \frac{1}{2^{K-1}} \bm{g}_{k}^{T}\bm{Y}.  
$$
Plugging this into Equation~(\ref{eq:itt_decomp}), we can write the constant complier main effect as
$$
\begin{aligned}
\delta_{k} = \frac{1}{2^{K-1}\rho_{c_{k}}} \Bigg[ &\bm{g}_{k}^{T}\bm{Y} - \sum_{\z_{-k} \in \mathcal{Z}_{-k}} \rho_{cc(\z_{-k})}\left[\overline{Y}_{cc(\z_{-k})}(\z_{-k}, z_{k}^{+}) - \overline{Y}_{cc(\z_{-k})}(\z_{-k}, z_{k}^{-})\right]
 \\
& - \sum_{\z_{-k} \in \mathcal{Z}_{-k}} \rho_{cn(\z_{-k})}\left[\overline{Y}_{cn(\z_{-k})}(\z_{-k}, z_{k}^{+}) - \overline{Y}_{cn(\z_{-k})}(\z_{-k}, z_{k}^{-})\right] \Bigg].  
\end{aligned}
$$

Consider the ITT effect on treatment uptake.
Let $\nu_k^{+}(\z_{-k}) = \frac{1}{2}(\overline{D}_{k}(\z_{-k}, z_{k}^{+}) + 1)$ be the proportion of units receiving $+1$ for factor $k$ under assignment $(\z_{-k}, z_{k}^{+})$ and $\nu_k^{-}(\z_{-k}) = \frac{1}{2}(\overline{D}_{k}(\z_{-k}, z_{k}^{-}) + 1)$ be the proportion of units receiving $+1$ for factor $k$ under assignment $(\z_{-k}, z_{k}^{-})$.  Then we can write the ITT for treatment uptake as
\[\nu_k(\z_{-k}) = \nu_k^{+}(\z_{-k}) - \nu_k^{-}(\z_{-k}) =\frac{1}{2}\left[\overline{D}_k(\z_{-k}, z_{k}^{+}) - \overline{D}_k(\z_{-k}, z_{k}^{-})\right] = \rho_k+ \rho_{cc(\z_{-k})},\]
which tells us the proportion of respondents that are either conditional or constant compliers. Without further assumptions, we cannot break apart these two groups, so we introduce an assumption on the compliance behavior.

\subsection{Bounds under a least compliant profile}

Without further assumptions, the bounds for the constant complier effects will be from $-1$ to $1$ because there is insufficient information about the size of the constant complier group.
To put structure on this problem, we first assume there is a profile of the other factors that is the worst in terms of compliance for factor $k$ for all units.

\begin{assumption}[Least compliant profile for factor $k$]\label{assump:order_comp}
For all $i$, there is a $\tilde{\z}_{-k} \in \mathcal{Z}_{-k}$ such that $D_{ik}(\z_{-k}, z_{k}^{+}) - D_{ik}(\z_{-k}, z_{k}^{-}) \geq D_{ik}(\tilde{\z}_{-k}, z_{k}^{+}) - D_{ik}(\tilde{\z}_{-k}, z_{k}^{-}) $ for all other $\z_{-k} \in \mathcal{Z}_{-k}$.
\end{assumption}
Assumption~\ref{assump:order_comp} implies that there is a ``most difficult'' profile of the other factors, $\tilde{\z}_{-k}$, in terms of compliance.
In other words, if a unit complies with assignment on factor $k$ under $Z_{i,-k} = \tilde{\z}_{-k}$, then we know it would comply with all other possible assignments.
This implies that we can identify the proportion of constant compliers as
$
\nu_k(\tilde{\z}_{-k}) =  \frac{1}{2}\left[\overline{D}_k(\tilde{\z}_{-k}, z_{k}^{+}) - \overline{D}_k(\tilde{\z}_{-k}, z_{k}^{-})\right] = \rho_{k}.
$
Then we have $\rho_{cc(\z_{-k})} = \nu_k(\z_{-k}) - \nu_k(\tilde{\z}_{-k})$ and $\rho_{cn(\z_{-k})} = 1 - \nu_k(\z_{-k})$.
Below, we discuss this and other assumptions in the context of our applied setting. 

Our bounding strategy will rely on average effects for conditional compliers being bounded between $-1$ and $1$. Furthermore, noncompliers actually provide additional information because we can observe certain average outcomes for those groups. For instance, suppose that a unit is an always-taker for factor $k$ with assignment $\z_{-k}$, so that $D_{ik}(\z_{-k}, z_{k}^{+}) = D_{ik}(\z_{-k}, z_{k}^{-}) = +1$.
Under conditional monotonicity, we can distinguish one such set of always-takers: those that take treatment when assigned control, $D_{i}(\z_{-k}, z_{k}^{-}) = +1$; similarly, conditional never-takers will be those with $D_{i}(\z_{-k}, z_{k}^{+}) = -1$.
Let $\overline{Y}_{a(\z_{-k})}(\z_{-k}, z_{k}^{-})$ and $\overline{Y}_{n(\z_{-k})}(\z_{-k}, z_{k}^{+})$ be the average potential outcomes for conditional always- and never-takers, respectively, which we can identify under conditional monotonicity.

\begin{proposition}\label{prop:order_comp_bounds}
    Suppose Assumptions~\ref{assump:cr} and \ref{assump:er} hold and Assumptions~\ref{assump:mono} and \ref{assump:order_comp} hold with respect to factor $k$. Then, $\delta_{k} \in [\frac{\widetilde{\delta}_{k}-b^{-}_k}{2^{{K-1}}\nu_k(\tilde{\z}_{-k})},\frac{\widetilde{\delta}_{k} + b^{+}_k}{2^{{K-1}}\nu_k(\tilde{\z}_{-k})}] \cap [-1, 1]$, where
  \begin{equation}
    \label{eq:order_comp_bounds}
    \begin{aligned}
    b^{-}_k & =  \Big[\sum_{\bm{z}_{-k}} \nu_{k}(\z_{-k}) - \nu_k(\tilde{\z}_{-k}) + \nu_{k}^{-}(\z_{-k}) \Big], \qquad
    b^{+}_k = \Big[\sum_{\bm{z}_{-k}} \nu_{k}(\z_{-k}) - \nu_k(\tilde{\z}_{-k}) + (1- \nu_{k}^{+}(\z_{-k})) \Big], \\
      \widetilde{\delta}_{k}&= \Big[\bm{g}_k^T\bm{Y}  - \sum_{\bm{z}_{-k}}(1- \nu_{k}^{+}(\z_{-k}))\overline{Y}_{n(\z_{-k})}(\z_{-k}, z_{k}^{+}) + \sum_{\bm{z}_{-k}}  \nu_{k}^{-}(\z_{-k})\overline{Y}_{a(\z_{-k})}(\z_{-k}, z_{k}^{-})\Big]. 
    \end{aligned}
  \end{equation}
\end{proposition}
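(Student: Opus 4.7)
The plan is to begin from the algebraic rearrangement of the ITT decomposition~(\ref{eq:itt_decomp}) that the paper writes immediately before the proposition, expressing $\delta_k$ as $\bm{g}_k^T\bm{Y}$ minus the conditional-complier and conditional-noncomplier contributions, all divided by $2^{K-1}\rho_{c_k}$. I would then successively substitute in the consequences of Assumptions~\ref{assump:mono} and~\ref{assump:order_comp} to split the right-hand side into (a) quantities that are identified from the observed data and (b) unidentified quantities whose range is controlled by the $[0,1]$ outcome restriction. The remaining work is bookkeeping: collect the identified pieces into $\widetilde{\delta}_k$, bound the unidentified pieces to form the one-sided widths $b_k^{\pm}$, and intersect with $[-1,1]$.

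The first substitution uses Assumption~\ref{assump:order_comp}, which gives $\rho_{c_k} = \nu_k(\tilde{\z}_{-k})$, $\rho_{cc(\z_{-k})} = \nu_k(\z_{-k}) - \nu_k(\tilde{\z}_{-k})$, and $\rho_{cn(\z_{-k})} = 1 - \nu_k(\z_{-k})$, as already noted in the paragraph after that assumption. The second uses conditional monotonicity to partition the conditional noncompliers at each $\z_{-k}$ into conditional always-takers (share $\nu_k^-(\z_{-k})$) and conditional never-takers (share $1 - \nu_k^+(\z_{-k})$). For any $z$, this yields $\rho_{cn(\z_{-k})}\overline{Y}_{cn(\z_{-k})}(\z_{-k}, z) = \nu_k^-(\z_{-k})\overline{Y}_{a(\z_{-k})}(\z_{-k}, z) + (1 - \nu_k^+(\z_{-k}))\overline{Y}_{n(\z_{-k})}(\z_{-k}, z)$, so the conditional-noncomplier piece of the $\delta_k$ expression expands into four sums indexed by (taker type $\times$ assignment level).

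Next I would sort these pieces by whether they are identified. Under conditional monotonicity only $\overline{Y}_{n(\z_{-k})}(\z_{-k}, z_k^+)$ and $\overline{Y}_{a(\z_{-k})}(\z_{-k}, z_k^-)$ are observable, since under each of those assignments the realized $D_{ik}$ uniquely picks out the corresponding taker type. The cross-assignment means $\overline{Y}_{a(\z_{-k})}(\z_{-k}, z_k^+)$ and $\overline{Y}_{n(\z_{-k})}(\z_{-k}, z_k^-)$ are not identified but each lies in $[0,1]$, and the conditional-complier contrast $\overline{Y}_{cc(\z_{-k})}(\z_{-k}, z_k^+) - \overline{Y}_{cc(\z_{-k})}(\z_{-k}, z_k^-)$ lies in $[-1,1]$. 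The identified pieces assemble into $\widetilde{\delta}_k$, and worst-casing the unidentified pieces (using the endpoints $0$ and $1$ for each taker mean and $\pm 1$ for each complier contrast), then summing over $\z_{-k}$ and dividing by $2^{K-1}\nu_k(\tilde{\z}_{-k})$, delivers the stated $b_k^\pm$. The final intersection with $[-1,1]$ holds because $\delta_k$ is itself a difference of averages of $[0,1]$-valued potential outcomes.

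The main obstacle is careful sign accounting. The two unidentified taker means enter the $\delta_k$ expression with definite (opposite) signs, so the asymmetry between $b_k^-$ and $b_k^+$ is driven by which shares, $\sum_{\z_{-k}} \nu_k^-(\z_{-k})$ versus $\sum_{\z_{-k}} (1-\nu_k^+(\z_{-k}))$, enter each width, while the symmetric conditional-complier contribution $\sum_{\z_{-k}}[\nu_k(\z_{-k}) - \nu_k(\tilde{\z}_{-k})]$ appears in both. Making sure each identified taker mean lands on the correct side of the decomposition with the correct sign, so that what remains to be worst-cased is exactly the two unidentified taker means and the complier contrasts, is the finicky bit of the algebra.
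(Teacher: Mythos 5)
Your proposal is correct and takes essentially the same route as the paper's proof: substitute the shares implied by Assumptions~\ref{assump:mono} and~\ref{assump:order_comp} (so $\rho_{c_k}=\nu_k(\tilde{\z}_{-k})$, etc.), split the conditional noncompliers into conditional always- and never-takers, retain the identified means $\overline{Y}_{n(\z_{-k})}(\z_{-k},z_k^{+})$ and $\overline{Y}_{a(\z_{-k})}(\z_{-k},z_k^{-})$, and worst-case the conditional-complier contrast in $[-1,1]$ and the unidentified cross-assignment taker means in $[0,1]$ before intersecting with $[-1,1]$. The only remark worth adding is that the sign bookkeeping you flag as the finicky step, when carried out, shows the identified always-taker term entering with a $+$ sign (exactly as in the paper's own rearrangement), so your argument reproduces the paper's proof; the $-$ sign printed on that term in the displayed $\widetilde{\delta}_{k}$ of the proposition appears to be a typo rather than a defect of your approach.
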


\begin{remark}\label{remark:larger_simple_bounds}
  The  width of these bounds (assuming they are away from bounds of the parameter space) can be written as $\rho_{k}^{-1}(1 - \rho_{k} + \frac{1}{2^{K-1}} \sum_{\bm{z}_{-k}} \rho_{cc(\bm{z}_{-k})})$.
  The numerator here combines all non-constant compliers ($1-\rho_{k}$) and the average size of the conditional complier group ($\frac{1}{2^{K-1}} \sum_{\bm{z}_{-k}} \rho_{cc(\bm{z}_{-k})}$), whereas the denominator is a size of the constant complier group. Thus, the bounds shrink as the size of the constant complier group grows.

\end{remark}

\subsection{Bounds under weak treatment exclusion}

We now derive bounds for the constant complier effects when we place restrictions on the first stage relationship between treatment assignment and treatment take-up. In particular, we investigate an assumption that a factor's assignment can only affect overall treatment uptake if it affects uptake on that particular factor.

\begin{assumption}[Weak treatment exclusion for factor $k$]\label{assump:weak_treat_ex}
  For any $\z_{-k} \in \mathcal{Z}_{-k}$, if  $D_{ik}(\z_{-k}, z_{k}^{+}) = D_{ik}(\z_{-k}, z_{k}^{-})$, then $\bm{D}_i(\z_{-k}, z_{k}^{+}) = \bm{D}_i(\z_{-k}, z_{k}^{-})$ for all $i$. 
\end{assumption}

Assumption~\ref{assump:weak_treat_ex} is inspired by a similar assumption in the context of noncompliance and interference in \cite{imai2018causal} and is a weaker version of the treatment exclusion assumption from \cite{blackwell2021noncompliance}, which held that $D_{ik}(z_{k}, \bm{z}_{-k}) = D_{ik}(z_{k}, \bm{z}'_{-k})$ for all $z_{k}, \bm{z}_{-k}$, $\bm{z}'_{-k}$, and $k$.  \cite{blackwell2021noncompliance} used this stronger assumption to point identify the effects of various complier average factorial effects. 
One downside to the stronger treatment exclusion assumption, however, is that it forbids any interactions between factors on the relationship between $\bm{Z}_{i}$ and $\bm{D}_{i}$. 
That is, it disallows assignment of treatment for any factor $k$ from influencing uptake of any other factor $j$.
Assumption~\ref{assump:weak_treat_ex} allows for a structured interaction, where assignment to factor $k$ can only affect uptake on other factors if unit $i$ is a complier for $k$.
It also only requires this assumption to hold on the particular factor $k$ of interest.
If all units are constant compliers on all factors except for factor $k$, then Assumption~\ref{assump:weak_treat_ex} is automatically satisfied.

\begin{figure}
  \centering
  \subfloat[]{\label{fig:weak_tr_dag_a}
  \begin{tikzpicture}[>=triangle 45]
    \node (d1) at (0,1) {$D_{1}$};
    \node (z1) at (-2, 1) {$Z_{1}$};
    \node (z2) at (-2, -1) {$Z_{2}$};
    \node (d2) at (0, -1)  {$D_{2}$};
    \node (u3) at (1, 0) {$U_{3}$};

    \node (y) at (3, 0) {$Y$};
    \draw[->, >=stealth, thick] (d1) -- (y);
    \draw[->, >=stealth, thick] (d1) -- (d2);
    \draw[->, >=stealth, thick] (z1) -- (d1);
    \draw[->, >=stealth, thick] (d2) -- (y);
    \draw[->, >=stealth, thick] (z2) -- (d2);
    \draw[->, >=stealth, thick, dashed] (u3) -- (d2);
    \draw[->, >=stealth, thick, dashed] (u3) -- (d1);
    \draw[->, >=stealth, thick, dashed] (u3) -- (y);
  \end{tikzpicture}
}
\subfloat[]{\label{fig:weak_tr_dag_b}
  \begin{tikzpicture}[>=triangle 45]
    \node (d1) at (0,1) {$D_{1}$};
    \node (z1) at (-2, 1) {$Z_{1}$};
    \node (z2) at (-2, -1) {$Z_{2}$};
    \node (d2) at (0, -1)  {$D_{2}$};
    \node (u3) at (1, 0) {$U_{3}$};

    \node (y) at (3, 0) {$Y$};
    \draw[->, >=stealth, thick] (d1) -- (y);
    \draw[->, >=stealth, thick] (z1) -- (d1);
    \draw[->, >=stealth, thick] (z1) -- (d2);
    \draw[->, >=stealth, thick] (d1) -- (d2);
    \draw[->, >=stealth, thick] (d2) -- (y);
    \draw[->, >=stealth, thick] (z2) -- (d2);
    \draw[->, >=stealth, thick, dashed] (u3) -- (d2);
    \draw[->, >=stealth, thick, dashed] (u3) -- (d1);
    \draw[->, >=stealth, thick, dashed] (u3) -- (y);
  \end{tikzpicture}
  }
  \caption{\label{fig:weak_tr_dag} Directed acyclic graphs (DAGs) of causal structures that would satisfy weak treatment exclusion for both factors (a) and for just factor 2 (b).}
\end{figure}
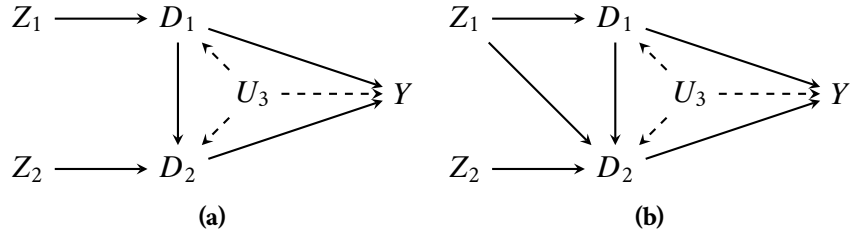

There are many situations where Assumption~\ref{assump:weak_treat_ex} can hold by design. In particular, Figure~\ref{fig:weak_tr_dag} shows two directed acyclic graphs (DAGs) with causal structures where the first factor is assigned first, allowing for it to affect the second factor but not vice versa.
In both DAGs, assignment to the second factor cannot affect uptake on the first factor, so weak exclusion for factor 2 holds. In Figure~\ref{fig:weak_tr_dag_a}, weak exclusion also holds with respect to factor 1 because assignment to factor 1 only affects uptake on factor 2 through uptake on factor 1. Thus, there is no effect of $Z_{i1}$ on $D_{i2}$ for noncompliers on factor 1. This DAG may be plausible in situations where noncompliers do not know their assignment, making it impossible for it to affect their compliance on factor 2. In Figure~\ref{fig:weak_tr_dag_b}, the direct arrow from $Z_{i1}$ to $D_{i2}$ means weak exclusion is violated for factor 1, but the assumption does continue to hold for factor 2.
In these settings we still allow for arbitrary joint dependence between treatment uptake on each factor and the outcome. The important part of Assumption~\ref{assump:weak_treat_ex} is the restrictions on the effects of assignment on uptake.  

\begin{remark}
For a proposed least compliant profile, Assumption~\ref{assump:order_comp} is falsifiable by testing whether the compliance rate under that profile is less than or equal to the compliance rate under any other profile.
As stated, we do not believe that Assumption~\ref{assump:weak_treat_ex} is directly testable, though stronger versions of the Assumption, such as those encoded in the DAGs of Figure~\ref{fig:weak_tr_dag_a}, could be tested by assessing the missing arrows from treatment assignment to uptake.
\end{remark}

Assumptions~\ref{assump:er} and \ref{assump:weak_treat_ex} give us that if unit $i$ does not comply on factor $k$ given assignment $\z_{-k}$ on the other factors, then $Y_{i}(\z_{-k}, z_{k}^{+}) = Y_{i}(\z_{k}, z_{k}^{-})$, which implies that there is no effect for conditional noncompliers and allows us to obtain narrower bounds. 

\begin{proposition}\label{prop:weak_tr_ex}
  Suppose Assumptions~\ref{assump:cr} and \ref{assump:er}  hold, and Assumptions~\ref{assump:mono}, \ref{assump:order_comp}, and~\ref{assump:weak_treat_ex} hold with respect to factor $k$.
  Then,  $\delta_{k} \in[\frac{1}{2^{K-1}}\sum_{\z_{-k}}\tilde{b}_{k}^-(\z_{-k}),\frac{1}{2^{K-1}}\sum_{\z_{-k}}\tilde{b}_{k}^+(\z_{-k})]$, where
       \begin{align*} 
\tilde{b}_{k}^-(\z_{-k}) &= \max\left(0, \frac{\nu_k^{+}(\z_{-k})}{\nu_k(\tilde{\z}_{-k})}\overline{Y}(\z_{-k}, z_{k}^{+}|D_k(\z_{-k}, z_{k}^{+}) = +1) -\frac{\nu_k^{-}(\z_{-k})}{\nu_k(\tilde{\z}_{-k})}\overline{Y}_{a(\z_{-k})}(\z_{-k}, z_{k}^{-}) -  \frac{\nu_k(\z_{-k}) -\nu_k(\tilde{\z}_{-k}) }{\nu_k(\tilde{\z}_{-k})}\right)\\
&  -  \min\left(1, \frac{ 1-\nu_k^{-}(\z_{-k})}{ \nu_k(\tilde{\z}_{-k})} \overline{Y}(\z_{-k}, z_{k}^{-}|D_k(\z_{-k}, z_{k}^{-}) = -1) - \frac{1 - \nu_k^{+}(\z_{-k})}{ \nu_k(\tilde{\z}_{-k})}\overline{Y}_{n(\z_{-k})}(\z_{-k}, z_{k}^{+})\right),
\end{align*}
\begin{align*}
&\tilde{b}_{k}^+(\z_{-k}) = \min\left(1,  \frac{\nu_k^{+}(\z_{-k})}{\nu_k(\tilde{\z}_{-k})}\overline{Y}(\z_{-k}, z_{k}^{+}|D_k(\z_{-k}, z_{k}^{+}) = +1) -\frac{\nu_k^{-}(\z_{-k})}{\nu_k(\tilde{\z}_{-k})}\overline{Y}_{a(\z_{-k})}(\z_{-k}, z_{k}^{-})\right)\\
& - \max\Bigg(0, \frac{ 1-\nu_k^{-}(\z_{-k})}{\nu_k(\tilde{\z}_{-k})} \overline{Y}(\z_{-k}, z_{k}^{-}|D_k(\z_{-k}, z_{k}^{-}) = -1)- \frac{1 - \nu_k^{+}(\z_{-k})}{\nu_k(\tilde{\z}_{-k})}\overline{Y}_{n(\z_{-k})}(\z_{-k}, z_{k}^{+}) - \frac{\nu_k(\z_{-k}) -\nu_k(\tilde{\z}_{-k}) }{\nu_k(\tilde{\z}_{-k})}\Bigg),
 \end{align*} 
 where $\bar{Y}(\z_{-k}, z_k|D_k(\z_{-k}, z_k) = w)$ is the average outcome among individuals assigned to $(\z_{-k}, z_k)$ whose uptake on factor $k$ is $w$.
 These bounds are narrower than under Proposition~\ref{prop:order_comp_bounds}. 
\end{proposition}
 
 \begin{remark}
 Under the same assumptions as in Proposition~\ref{prop:weak_tr_ex}, there are less sharp but more interpretable bounds: $\delta_{k} \in[\delta^{\prime}_k-\tilde{b}_k,\delta^{\prime}_k + \tilde{b}_k] \cap [-1, 1]$, where
  \begin{equation}
    \label{eq:weak_tr_ex_bounds}
    \begin{aligned}      
     \tilde{b}_k & =  \frac{\sum_{\z_{-k}} (\nu_k(\z_{-k}) - \nu_k(\tilde{\z}_{-k}))}{2^{K-1}\nu_k(\tilde{\z}_{-k})}, &\qquad 
    \delta^{\prime}_k & =  \frac{\bm{g}_{k}^{T}\bm{Y}}{2^{K-1}\nu_k(\tilde{\z}_{-k})} .
    \end{aligned} 
  \end{equation}
  \end{remark}

The width of the bounds in Proposition~\ref{prop:weak_tr_ex} depend on the size of the conditional compliers relative to the size of the constant compliers.
If the constant compliers group is too small (i.e., if we have weak instrument) or the conditional complier group too big, the resulting bounds may be uninformative.
Under the stronger assumption of treatment exclusion from \cite{blackwell2021noncompliance}, the quantities $\nu_k(\z_{-k}) - \nu_k(\tilde{\z}_{-k})$ are zero because there are only constant compliers and noncompliers, implying the bounds collapse to a point as in \cite{blackwell2021noncompliance}.
In Supplemental Materials~\ref{append:simulation}, we present a simulation study of these bounds that show how the width of the bounds increases with the number of factors and decreases with the rate of constant compliance; see also Remark~\ref{remark:increasing_k}.
One advantage of considering constant compliers for one factor $k$ over the perfect compliers defined in \cite{blackwell2021noncompliance}, who comply with all factors, is that the constant complier group will always be at least as large (and typically larger) than the perfect complier group.
Without the stronger assumptions of \cite{blackwell2021noncompliance} or very high compliance rates across all factors, it may be difficult to get informative bounds for multiple factors in practice.
A disadvantage of constant compliers over perfect compliers is that estimands for different factors target different populations; our work is therefore most appropriate when researchers wish to learn a complier effect for a focal factor of interest.

In Supplementary Materials~\ref{append:sharp_bounds}, we use linear programming to obtain sharp bounds under a binary outcome with two factors.
Those bounds overlap with the bounds above for some parameter values.
The bounds above do not depend on binary outcomes and are, thus, more general.

\begin{remark}\label{remark:increasing_k}
An interesting nuance of the factorial setting is the consequence of increasing the number of factors, $K$.
There are two potential consequences for the identification results.
First, the assumptions, especially \ref{assump:order_comp} and \ref{assump:weak_treat_ex}, may be more difficult to justify, because increasing the number of factors could lead to more heterogeneity of behavior across the possible treatments and more complicated relationships among the factors.
Second, as additional factors are added to an experiment, we might expect the proportion of constant compliers for a given factor to decrease.
That is, we expect the proportion of individuals who always comply with factor $k$ when varying the levels of $K$ factors to be less than or equal to the proportion of individuals who always comply with factor $k$ when varying the levels of $K-1$ factors.
If this is the case and the average proportion of conditional compliers remains the same or increases, the bounds will increase (or at least not decrease) in width as $K$ increases and $\nu_k(\tilde{\z}_{-k})$ decreases (or remains the same).
\end{remark}
\begin{remark}\label{remark:conservative}
Consider the case where we believe Assumption~\ref{assump:order_comp} holds but we do not know which combination of other factors, $\tilde{\z}_{-k}$ leads to the ``least'' compliance.
A seemingly reasonable strategy might be to estimate $\rho_k$ based on the assignment of other factors with the smallest \textit{observed} compliance.
That is, $\widehat{\rho}_k = \min_{\z_{-k}} \widehat{\overline{D}}_k(\z_{-k}, z_{k}^{+}) -\widehat{ \overline{D}}_k(\z_{-k}, z_{k}^{-})$.
This is not unbiased for $\rho_k$, but rather will equal in expectation some value $t_k \leq \rho_k$. 
The question is: Will this be conservative, in the sense that the bound resulting from using $t_k$ will cover the bound using $\rho_k$?
As we show in the Supplemental Materials~\ref{append:proofs}, it is, in fact, conservative (and will be wider than using the true value $\rho_k$).
Two further comments on this results are warranted: 
1. Although the bounds using $t_k$ are conservative in the way described, they will be centered on an estimate strictly greater in magnitude due to dividing by a smaller proportion.
2. This conservativeness only holds if Assumption~\ref{assump:order_comp} holds. Importantly, this assumption is for every unit in the experiment and thus not directly testable. Arguably, if researchers do not have a substantive reason to believe a particular treatment combination $\z_{-k}$ will yield lower compliance on factor $k$ for all units (or if the observed data provides strong evidence against the hypothesized combination), then the assumption in general might be called into question. 
We investigate this approach in our simulation study in Section~D of the Supplemental Materials and find that bounds using the estimated least compliant profile are indeed more conservative than with the known least compliant profile. In that data generating process, the conservative bounds are roughly 20\% wider than the true bounds with the known least compliant profile.
\end{remark}

\section{Data illustration}\label{sec:data_ex}
We reanalyze data from \cite{BlaJamShe17}, an experiment on the use of cognitive behavioral therapy (CBT) and cash infusions for antisocial and criminal behaviors among at-risk young men in Liberia, to illustrate our methods.
We focus on the outcome of reported drug selling behavior following the experimental intervention assigned, which was a binary outcome (yes or no). 
The experiment was a $2^2$ factorial design, with factors CBT and cash each having levels receive or do not receive.
Compliance was one-sided in that individuals assigned not to receive an intervention (CBT or cash) would not be able to receive that intervention.
There was some noncompliance on both factors: 63\% overall compliance on CBT and 98\% overall compliance on cash.

Assumption~\ref{assump:mono} is assured for both factors due to the one-sided nature of the compliance.
Moreover, because of one-sided noncompliance, there are 4 compliance types for CBT (or cash): \{cc, cn, nc, nn\}, where the first type refers to CBT (or cash) compliance when assigned to receive cash (or CBT) (complier is c, never taker is n) and the second type is CBT (or cash) compliance when not assigned to receive cash (or CBT).

Consider possible compliance behaviors and how they relate to the least compliant profile for factor $k$ assumption (Assumption~\ref{assump:order_comp}) and the weak-treatment exclusion restriction assumption (Assumption~\ref{assump:weak_treat_ex}).
It is plausible that individuals' compliance with cash is not impacted by their assignment on CBT.
In this case, Assumption~\ref{assump:order_comp} automatically holds for cash.
On the other hand, being assigned to receive a cash transfer might change compliance with CBT assignment. 
A unit might be a conditional never-taker for CBT if not assigned the cash transfer, but a conditional complier if they were assigned the cash transfer; this is plausible if individuals feel more obligation to comply after receiving cash.
If anyone that complies with CBT in the ``no cash'' condition will comply with CBT in the ``cash'' condition, then Assumption~\ref{assump:order_comp} holds for CBT with no cash transfer being the level giving least compliance.
This aligns with the observed data: 64.4\% of individuals assigned to CBT when also assigned to cash, whereas 62.4\% of those not assigned to cash complied with CBT.
Under the scenarios laid out, Assumption~\ref{assump:weak_treat_ex} would hold for CBT but not for cash. 
These assumptions must always be assessed based on subject-matter knowledge.
Note that the stronger treatment exclusion restriction assumption of \cite{Blackwell17} and \cite{blackwell2021noncompliance} would require no interactions of any kind between uptake on CBT and assignment on cash -- meaning the type of ``obligation'' behavior previously described would not be allowed.

Under Proposition~\ref{prop:weak_tr_ex}, the estimated bound for the constant complier effect on CBT is $(-0.114, -0.081)$ and that for the constant complier effect on cash is $(-0.019, -0.012)$.
The bounds on both effects do not include 0, making them informative.
Under the stronger assumptions of  \cite{blackwell2021noncompliance}, the estimated effect for CBT is $-0.096$ and for cash is $-0.015$, which both fall within the previous bounds.
The bound under Proposition~\ref{prop:order_comp_bounds} (that is, without assuming weak treatment exclusion) for CBT is $(-0.123, 0.4497)$ and under Remark~\ref{remark:larger_simple_bounds} is $(-0.701, 0.506)$---both much wider and including 0.
Under Assumption~\ref{assump:weak_treat_ex} for CBT but not cash, we would not trust the bounds (or the stronger point identification) on the cash estimate.
However, it provides a useful illustration here in comparison to CBT: as the compliance rate increases, the bounds and confidence intervals will both become narrower.
Further, the bounds depend on estimates of $\nu_k(\z_{-k}) - \nu_k(\tilde{\z}_{-k})$, so the width of the bounds will also depend on how much variability there is in observed compliance across the treatment arms.

\bibliographystyle{apalike}
\bibliography{ref}{}

\if0\submission
\begin{appendices}
\renewcommand{\thesection}{\Alph{section}}
\renewcommand\thefigure{SM.\arabic{figure}}
\renewcommand\thetable{SM.\arabic{table}}
\setcounter{lemma}{0}
\renewcommand{\thelemma}{SM.\arabic{lemma}}
\renewcommand{\theassumption}{SM.\arabic{assumption}}
\renewcommand{\theproposition}{SM.\arabic{proposition}}

These supplementary materials are organized as follows:
Section~\ref{append:proofs} provides proofs of the results in the main paper.
Section~\ref{append:int} gives additional results for bounds on two-way interaction effects.
Section~\ref{append:est_inf} discusses estimation and inference for the identification methods discussed in the main paper.

\section{Proofs}\label{append:proofs}

\subsection{Proof of Proposition~\ref{prop:order_comp_bounds}}\label{append:proof_rder_comp_bounds}
We prove the bounds under Assumptions~\ref{assump:cr}, \ref{assump:er}, \ref{assump:mono}, and~\ref{assump:order_comp}.
\begin{proof}
Given the results shown in Section~\ref{sec:ident}, we can rewrite $\delta_k$ as 
   \begin{align*}
     \delta_{k} =& \frac{1}{2^{K-1}\nu_k(\tilde{\z}_{-k})} \Bigg[ \bm{g}_{k}^{T}\bm{Y} - \sum_{\z_{-k} \in \mathcal{Z}_{-k}} (\nu_k(\z_{-k}) - \nu_k(\tilde{\z}_{-k}))\left[\overline{Y}_{cc(\z_{-k})}(\z_{-k}, z_{k}^{+}) - \overline{Y}_{cc(\z_{-k})}(\z_{-k}, z_{k}^{-})\right]
  \\
 & \hspace{2.8cm} - \sum_{\z_{-k} \in \mathcal{Z}_{-k}} (1 - \nu_k(\z_{-k})) \left[\overline{Y}_{cn(\z_{-k})}(\z_{-k}, z_{k}^{+}) - \overline{Y}_{cn(\z_{-k})}(\z_{-k}, z_{k}^{-})\right] \Bigg]\\
 =& \frac{1}{2^{K-1}\nu_k(\tilde{\z}_{-k})} \Bigg[ \bm{g}_{k}^{T}\bm{Y} - \sum_{\z_{-k} \in \mathcal{Z}_{-k}} (\nu_k(\z_{-k}) - \nu_k(\tilde{\z}_{-k}))\left[\overline{Y}_{cc(\z_{-k})}(\z_{-k}, z_{k}^{+}) - \overline{Y}_{cc(\z_{-k})}(\z_{-k}, z_{k}^{-})\right]
  \\
 & \hspace{2.8cm} - \sum_{\z_{-k} \in \mathcal{Z}_{-k}} \rho_{n(\z_{-k})} \left[\overline{Y}_{n(\z_{-k})}(\z_{-k}, z_{k}^{+}) - \overline{Y}_{n(\z_{-k})}(\z_{-k}, z_{k}^{-})\right] \\
 & \hspace{2.8cm} - \sum_{\z_{-k} \in \mathcal{Z}_{-k}} \rho_{a(\z_{-k})} \left[\overline{Y}_{a(\z_{-k})}(\z_{-k}, z_{k}^{+}) - \overline{Y}_{a(\z_{-k})}(\z_{-k}, z_{k}^{-})\right]\Bigg].  
 \end{align*}

 Rearrange to put the identified components first as follows:
  \begin{align*}
     \delta_{k}  =& \frac{1}{2^{K-1}\nu_k(\tilde{\z}_{-k})} \Bigg[ \bm{g}_{k}^{T}\bm{Y} - \sum_{\z_{-k} \in \mathcal{Z}_{-k}}\left( \rho_{n(\z_{-k})} \overline{Y}_{n(\z_{-k})}(\z_{-k}, z_{k}^{+}) - \rho_{a(\z_{-k})}\overline{Y}_{a(\z_{-k})}(\z_{-k}, z_{k}^{-})\right)\\
     &  \hspace{2.8cm} - \sum_{\z_{-k} \in \mathcal{Z}_{-k}} (\nu_k(\z_{-k}) - \nu_k(\tilde{\z}_{-k}))\left[\overline{Y}_{cc(\z_{-k})}(\z_{-k}, z_{k}^{+}) - \overline{Y}_{cc(\z_{-k})}(\z_{-k}, z_{k}^{-})\right]
  \\
 & \hspace{2.8cm} - \sum_{\z_{-k} \in \mathcal{Z}_{-k}}\left[\rho_{a(\z_{-k})} \overline{Y}_{a(\z_{-k})}(\z_{-k}, z_{k}^{+}) - \rho_{n(\z_{-k})} \overline{Y}_{n(\z_{-k})}(\z_{-k}, z_{k}^{-})\right] \Bigg].  
 \end{align*}
 
 We can bound the unidentified  components,
 \begin{align*}
      &  - \sum_{\z_{-k} \in \mathcal{Z}_{-k}} (\nu_k(\z_{-k}) - \nu_k(\tilde{\z}_{-k}))\left[\overline{Y}_{cc(\z_{-k})}(\z_{-k}, z_{k}^{+}) - \overline{Y}_{cc(\z_{-k})}(\z_{-k}, z_{k}^{-})\right]
  \\
 & \qquad - \sum_{\z_{-k} \in \mathcal{Z}_{-k}}\left[\rho_{a(\z_{-k})} \overline{Y}_{a(\z_{-k})}(\z_{-k}, z_{k}^{+}) - \rho_{n(\z_{-k})} \overline{Y}_{n(\z_{-k})}(\z_{-k}, z_{k}^{-})\right] \\
 & \in \Bigg(  - \sum_{\z_{-k} \in \mathcal{Z}_{-k}} (\nu_k(\z_{-k}) - \nu_k(\tilde{\z}_{-k})) - \sum_{\z_{-k} \in \mathcal{Z}_{-k}}\rho_{a(\z_{-k})}, \\
 & \qquad\sum_{\z_{-k} \in \mathcal{Z}_{-k}} (\nu_k(\z_{-k}) - \nu_k(\tilde{\z}_{-k})) + \sum_{\z_{-k} \in \mathcal{Z}_{-k}}\rho_{n(\z_{-k})} \Bigg).
 \end{align*}
We can finish the proof by noting that $\rho_{a(\z_{-k})} = \nu_{k}^{-}(\z_{-k})$ and $\rho_{n(\z_{-k})} = 1 - \nu_{k}^{+}(\z_{-k})$. 
 
\end{proof}



\subsection{Proof of Proposition~\ref{prop:weak_tr_ex}}\label{append:proof_weak_tr_ex}
We prove the bounds under Assumptions~\ref{assump:cr}, \ref{assump:er}, \ref{assump:mono}, \ref{assump:order_comp}, and~\ref{assump:weak_treat_ex}.
\begin{proof}
Much of the proof is similar to the approach in, e.g., \cite{cheng2006bounds}.
In the formulas below, we will condition on $c_k$, $cc$, $a$, and $n$ to indicate conditioning on the respective compliance type given $\z_{-k}$.
We have the following decompositions for identifiably means:
\begin{align*}
\overline{Y}(\z_{-k}, z_{k}^{+}|D_k(\z_{-k}, z_{k}^{+}) = +1) &= \frac{\rho_{c_{k}}}{\rho_{c_{k}} + \rho_{cc(\z_{-k})} + \rho_{a(\z_{-k})}}\overline{Y}_{c_k}(\z_{-k}, z_{k}^{-})\\
& \qquad + \frac{\rho_{cc(\z_{-k})} }{\rho_{c_{k}} + \rho_{cc(\z_{-k})} + \rho_{a(\z_{-k})}}\overline{Y}_{cc(\z_{-k})}(\z_{-k}, z_{k}^{+}) \\
& \qquad+ \frac{\rho_{a(\z_{-k})} }{\rho_{c_{k}} + \rho_{cc(\z_{-k})} + \rho_{a(\z_{-k})}}\overline{Y}_{a(\z_{-k})}(\z_{-k}, z_{k}) \text{ for } z_k \in \{-1,+1\},\\
\overline{Y}(\z_{-k}, z_{k}^{+}|D_k(\z_{-k}, z_{k}^{+}) = -1) &= \overline{Y}_{n(\z_{-k})}(\z_{-k}, z_{k}) \text{ for } z_k \in \{-1,+1\},\\
\overline{Y}(\z_{-k}, z_{k}^{-}|D_k(\z_{-k}, z_{k}^{-}) = +1) &=\overline{Y}_{a(\z_{-k})}(\z_{-k}, z_{k}) \text{ for } z_k \in \{-1,+1\},\\
\overline{Y}(\z_{-k}, z_{k}^{-}|D_k(\z_{-k}, z_{k}^{-}) = -1) &= \frac{\rho_{c_{k}}}{\rho_{c_{k}} + \rho_{cc(\z_{-k})} + \rho_{n(\z_{-k})}}\overline{Y}_{c_k}(\z_{-k}, z_{k}^{-})\\
& \qquad + \frac{\rho_{cc(\z_{-k})} }{\rho_{c_{k}} + \rho_{cc(\z_{-k})} + \rho_{n(\z_{-k})}}\overline{Y}_{cc(\z_{-k})}(\z_{-k}, z_{k}^{-}) \\
& \qquad+ \frac{\rho_{n(\z_{-k})} }{\rho_{c_{k}} + \rho_{cc(\z_{-k})} + \rho_{n(\z_{-k})}}\overline{Y}_{n(\z_{-k})}(\z_{-k}, z_{k}) \text{ for } z_k \in \{-1,+1\}.
\end{align*}

We can simplify the first and last equation to 
\begin{align*}
&\overline{Y}(\z_{-k}, z_{k}^{+}|D_k(\z_{-k}, z_{k}^{+}) = +1) -  \frac{\rho_{a(\z_{-k})} }{\rho_{c_{k}} + \rho_{cc(\z_{-k})} + \rho_{a(\z_{-k})}}\overline{Y}(\z_{-k}, z_{k}^{-}|D_k(\z_{-k}, z_{k}^{-}) = +1)\\
&= \frac{\rho_{c_{k}}}{\rho_{c_{k}} + \rho_{cc(\z_{-k})} + \rho_{a(\z_{-k})}}\overline{Y}_{c_k}(\z_{-k}, z_{k}^{-}) + \frac{\rho_{cc(\z_{-k})} }{\rho_{c_{k}} + \rho_{cc(\z_{-k})} + \rho_{a(\z_{-k})}}\overline{Y}_{cc(\z_{-k})}(\z_{-k}, z_{k}^{+}) \\
& \implies \frac{\rho_{c_{k}} + \rho_{cc(\z_{-k})} + \rho_{a(\z_{-k})}}{\rho_{c_{k}} + \rho_{cc(\z_{-k})}}\overline{Y}(\z_{-k}, z_{k}^{+}|D_k(\z_{-k}, z_{k}^{+}) = +1)\\
&\qquad \qquad -  \frac{\rho_{a(\z_{-k})} }{\rho_{c_{k}} + \rho_{cc(\z_{-k})}}\overline{Y}(\z_{-k}, z_{k}^{-}|D_k(\z_{-k}, z_{k}^{-}) = +1)\\
&\qquad = \frac{\rho_{c_{k}}}{\rho_{c_{k}} + \rho_{cc(\z_{-k})}}\overline{Y}_{c_k}(\z_{-k}, z_{k}^{+}) + \frac{\rho_{cc(\z_{-k})} }{\rho_{c_{k}} + \rho_{cc(\z_{-k})} }\overline{Y}_{cc(\z_{-k})}(\z_{-k}, z_{k}^{+}) \\
&\text{and}\\
&\overline{Y}(\z_{-k}, z_{k}^{+}|D_k(\z_{-k}, z_{k}^{-}) = -1) - \frac{\rho_{n(\z_{-k})} }{\rho_{c_{k}} + \rho_{cc(\z_{-k})} + \rho_{n(\z_{-k})}}\overline{Y}(\z_{-k}, z_{k}^{+}|D_k(\z_{-k}, z_{k}^{+}) = -1)\\
&= \frac{\rho_{c_{k}}}{\rho_{c_{k}} + \rho_{cc(\z_{-k})} + \rho_{n(\z_{-k})}}\overline{Y}_{c_k}(\z_{-k}, z_{k}^{-}) + \frac{\rho_{cc(\z_{-k})} }{\rho_{c_{k}} + \rho_{cc(\z_{-k})} + \rho_{n(\z_{-k})}}\overline{Y}_{cc(\z_{-k})}(\z_{-k}, z_{k}^{-})\\
&\implies \frac{\rho_{c_{k}} + \rho_{cc(\z_{-k})} + \rho_{n(\z_{-k})}}{\rho_{c_{k}} + \rho_{cc(\z_{-k})} }\overline{Y}(\z_{-k}, z_{k}^{+}|D_k(\z_{-k}, z_{k}^{-}) = -1)\\
&\qquad \qquad - \frac{\rho_{n(\z_{-k})} }{\rho_{c_{k}} + \rho_{cc(\z_{-k})} + }\overline{Y}(\z_{-k}, z_{k}^{+}|D_k(\z_{-k}, z_{k}^{+}) = -1)\\
&\qquad = \frac{\rho_{c_{k}}}{\rho_{c_{k}} + \rho_{cc(\z_{-k})} }\overline{Y}_{c_k}(\z_{-k}, z_{k}^{-}) + \frac{\rho_{cc(\z_{-k})} }{\rho_{c_{k}} + \rho_{cc(\z_{-k})}}\overline{Y}_{cc(\z_{-k})}(\z_{-k}, z_{k}^{-}).
\end{align*}

Let
\begin{align*}
q_{+}(\z_{-k}) &= \frac{\rho_{c_{k}} + \rho_{cc(\z_{-k})} + \rho_{a(\z_{-k})}}{\rho_{c_{k}} + \rho_{cc(\z_{-k})}}\overline{Y}(\z_{-k}, z_{k}^{+}|D_k(\z_{-k}, z_{k}^{+}) = +1) \\
&\qquad-  \frac{\rho_{a(\z_{-k})} }{\rho_{c_{k}} + \rho_{cc(\z_{-k})}}\overline{Y}(\z_{-k}, z_{k}^{-}|D_k(\z_{-k}, z_{k}^{-}) = +1)\\
q_{-}(\z_{-k}) &=
 \frac{\rho_{c_{k}} + \rho_{cc(\z_{-k})} + \rho_{n(\z_{-k})}}{\rho_{c_{k}} + \rho_{cc(\z_{-k})}}\overline{Y}(\z_{-k}, z_{k}^{-}|D_k(\z_{-k}, z_{k}^{-}) = -1) \\
 &\qquad-  \frac{\rho_{n(\z_{-k})} }{\rho_{c_{k}} + \rho_{cc(\z_{-k})}}\overline{Y}(\z_{-k}, z_{k}^{+}|D_k(\z_{-k}, z_{k}^{+}) = -1)\\
\end{align*}

For binary outcomes, we have 
\begin{align*}
0 \leq \E[Y(\z_{-k}, z_{k})| D_k(\z_{-k}, z_{k}^\prime)], \overline{Y}_{c_k}(\z_{-k}, z_{k}^{-}), \overline{Y}_{c_k}(\z_{-k}, z_{k}^{-}), \overline{Y}_{cc(\z_{-k})}(\z_{-k}, z_{k}^{-}) \leq 1.
\end{align*}

Then using results from \cite{horowitz1995identification} (or linear programming), we have
\begin{align*}
\max\left(0, 1 - \frac{1-q_{+}(\z_{-k})}{ \frac{\rho_{c_{k}}}{\rho_{c_{k}} + \rho_{cc(\z_{-k})}}}\right)  \leq \overline{Y}_{c_k}(\z_{-k}, z_{k}^{+})  \leq \min\left(1, \frac{q_{+}(\z_{-k})}{ \frac{\rho_{c_{k}}}{\rho_{c_{k}} + \rho_{cc(\z_{-k})}}}\right),\\
\max\left(0, 1 - \frac{1-q_{-}(\z_{-k})}{  \frac{\rho_{c_{k}}}{\rho_{c_{k}} + \rho_{cc(\z_{-k})} }}\right)  \leq \overline{Y}_{c_k}(\z_{-k}, z_{k}^{-})  \leq \min\left(1, \frac{q_{-}(\z_{-k})}{ \frac{\rho_{c_{k}}}{\rho_{c_{k}} + \rho_{cc(\z_{-k})} }}\right).
\end{align*}

Then we must have
\begin{align*}
\delta_k &=  \overline{Y}_{c_k}(\z_{-k}, z_{k}^{+}) -  \overline{Y}_{c_k}(\z_{-k}, z_{k}^{-})\\
&\in \Bigg(\max\left(0, 1 - \frac{1-q_{+}(\z_{-k})}{ \frac{\rho_{c_{k}}}{\rho_{c_{k}} + \rho_{cc(\z_{-k})}}}\right) -  \min\left(1, \frac{q_{-}(\z_{-k})}{ \frac{\rho_{c_{k}}}{\rho_{c_{k}} + \rho_{cc(\z_{-k})} }}\right),\\
&\qquad  \min\left(1, \frac{q_{+}(\z_{-k})}{ \frac{\rho_{c_{k}}}{\rho_{c_{k}} + \rho_{cc(\z_{-k})}}}\right) - \max\left(0, 1 - \frac{1-q_{-}(\z_{-k})}{  \frac{\rho_{c_{k}}}{\rho_{c_{k}} + \rho_{cc(\z_{-k})} }}\right)\Bigg).
\end{align*}

To simplify, recall that $\nu_k^{+}(\z_{-k})= \frac{1}{2}(\overline{D}_{k}(\z_{-k}, z_{k}^{+}) + 1) = \rho_{c_{k}} + \rho_{cc(\z_{-k})} + \rho_{a(\z_{-k})}$, $\nu_k^{-}(\z_{-k}) = \frac{1}{2}(\overline{D}_{k}(\z_{-k}, z_{k}^{-}) + 1) = \rho_{a(\z_{-k})}$, $\nu_k(\tilde{\z}_{-k}) =  \rho_{c_{k}}$, and $\nu_k(\z_{-k}) = \rho_k+ \rho_{cc(\z_{-k})}$.
We also see that we must have $1 - \nu_k^{+}(\z_{-k}) =  \rho_{n(\z_{-k})}$.

Simplifying the components of the bounds,
\begin{align*}
\frac{q_{+}(\z_{-k})}{ \frac{\rho_{c_{k}}}{\rho_{c_{k}} + \rho_{cc(\z_{-k})}}} &= \frac{(\rho_{c_{k}} + \rho_{cc(\z_{-k})} + \rho_{a(\z_{-k})})\overline{Y}(\z_{-k}, z_{k}^{+}|D_k(\z_{-k}, z_{k}^{+}) = +1)}{\nu_k(\tilde{\z}_{-k})}\\
&\qquad -  \frac{\rho_{a(\z_{-k})}\overline{Y}(\z_{-k}, z_{k}^{-}|D_k(\z_{-k}, z_{k}^{-}) = +1)}{ \nu_k(\tilde{\z}_{-k})} \\
&= \frac{\nu_k^{+}(\z_{-k})\overline{Y}(\z_{-k}, z_{k}^{+}|D_k(\z_{-k}, z_{k}^{+}) = +1) -\nu_k^{-}(\z_{-k})\overline{Y}(\z_{-k}, z_{k}^{-}|D_k(\z_{-k}, z_{k}^{-}) = +1)}{\nu_k(\tilde{\z}_{-k})}\\
&= \frac{\nu_k^{+}(\z_{-k})}{\nu_k(\tilde{\z}_{-k})}\overline{Y}(\z_{-k}, z_{k}^{+}|D_k(\z_{-k}, z_{k}^{+}) = +1) -\frac{\nu_k^{-}(\z_{-k})}{\nu_k(\tilde{\z}_{-k})}\overline{Y}_{a(\z_{-k})}(\z_{-k}, z_{k}^{-}),\\
\frac{q_{-}(\z_{-k})}{ \frac{\rho_{c_{k}}}{\rho_{c_{k}} + \rho_{cc(\z_{-k})} }} &= \frac{ (\rho_{c_{k}} + \rho_{cc(\z_{-k})} + \rho_{n(\z_{-k})}) \overline{Y}(\z_{-k}, z_{k}^{-}|D_k(\z_{-k}, z_{k}^{-}) = -1)}{\rho_{c_{k}}}\\
&\qquad - \frac{ \rho_{n(\z_{-k})}\overline{Y}(\z_{-k}, z_{k}^{+}|D_k(\z_{-k}, z_{k}^{+}) = -1)}{ \nu_k(\tilde{\z}_{-k})}\\
&= \frac{ (1-\nu_k^{-}(\z_{-k})) \overline{Y}(\z_{-k}, z_{k}^{-}|D_k(\z_{-k}, z_{k}^{-}) = -1)}{\nu_k(\tilde{\z}_{-k})}\\
&\qquad - \frac{(1 - \nu_k^{+}(\z_{-k}))\overline{Y}(\z_{-k}, z_{k}^{+}|D_k(\z_{-k}, z_{k}^{+}) = -1)}{ \nu_k(\tilde{\z}_{-k})}\\
&= \frac{ 1-\nu_k^{-}(\z_{-k})}{\nu_k(\tilde{\z}_{-k})} \overline{Y}(\z_{-k}, z_{k}^{-}|D_k(\z_{-k}, z_{k}^{-}) = -1)- \frac{1 - \nu_k^{+}(\z_{-k})}{\nu_k(\tilde{\z}_{-k})}\overline{Y}_{n(\z_{-k})}(\z_{-k}, z_{k}^{+}),\\
1 - \frac{1}{\frac{\rho_{c_{k}}}{\rho_{c_{k}} + \rho_{cc(\z_{-k})}}} &= - \frac{\rho_{cc(\z_{-k})}}{\rho_{c_{k}}} =  - \frac{\nu_k(\z_{-k}) -\nu_k(\tilde{\z}_{-k}) }{\nu_k(\tilde{\z}_{-k})}.
\end{align*}

We have lower bound
\begin{align*}
&\max\left(0, 1 - \frac{1-q_{+}(\z_{-k})}{ \frac{\rho_{c_{k}}}{\rho_{c_{k}} + \rho_{cc(\z_{-k})}}}\right) -  \min\left(1, \frac{q_{-}(\z_{-k})}{ \frac{\rho_{c_{k}}}{\rho_{c_{k}} + \rho_{cc(\z_{-k})} }}\right)\\
&= \max\left(0, \frac{\nu_k^{+}(\z_{-k})}{\nu_k(\tilde{\z}_{-k})}\overline{Y}(\z_{-k}, z_{k}^{+}|D_k(\z_{-k}, z_{k}^{+}) = +1) -\frac{\nu_k^{-}(\z_{-k})}{\nu_k(\tilde{\z}_{-k})}\overline{Y}_{a(\z_{-k})}(\z_{-k}, z_{k}^{-}) -  \frac{\nu_k(\z_{-k}) -\nu_k(\tilde{\z}_{-k}) }{\nu_k(\tilde{\z}_{-k})}\right)\\
&  -  \min\left(1, \frac{ 1-\nu_k^{-}(\z_{-k})}{ \nu_k(\tilde{\z}_{-k})} \overline{Y}(\z_{-k}, z_{k}^{-}|D_k(\z_{-k}, z_{k}^{-}) = -1) - \frac{1 - \nu_k^{+}(\z_{-k})}{ \nu_k(\tilde{\z}_{-k})}\overline{Y}_{n(\z_{-k})}(\z_{-k}, z_{k}^{+})\right)\\
\end{align*}
and upper bound
\begin{align*}
&\min\left(1, \frac{q_{+}(\z_{-k})}{ \frac{\rho_{c_{k}}}{\rho_{c_{k}} + \rho_{cc(\z_{-k})}}}\right) - \max\left(0, 1 - \frac{1-q_{-}(\z_{-k})}{  \frac{\rho_{c_{k}}}{\rho_{c_{k}} + \rho_{cc(\z_{-k})} }}\right)\\
&= \min\left(1,  \frac{\nu_k^{+}(\z_{-k})}{\nu_k(\tilde{\z}_{-k})}\overline{Y}(\z_{-k}, z_{k}^{+}|D_k(\z_{-k}, z_{k}^{+}) = +1) -\frac{\nu_k^{-}(\z_{-k})}{\nu_k(\tilde{\z}_{-k})}\overline{Y}_{a(\z_{-k})}(\z_{-k}, z_{k}^{-})\right)\\
& - \max\Bigg(0, \frac{ 1-\nu_k^{-}(\z_{-k})}{\nu_k(\tilde{\z}_{-k})} \overline{Y}(\z_{-k}, z_{k}^{-}|D_k(\z_{-k}, z_{k}^{-}) = -1)- \frac{1 - \nu_k^{+}(\z_{-k})}{\nu_k(\tilde{\z}_{-k})}\overline{Y}_{n(\z_{-k})}(\z_{-k}, z_{k}^{+})\\
&\qquad\qquad - \frac{\nu_k(\z_{-k}) -\nu_k(\tilde{\z}_{-k}) }{\nu_k(\tilde{\z}_{-k})}\Bigg).
\end{align*}

Leaving out the max and min arguments that depend on 0 and 1 in the bounds, we get the looser but still valid lower bound as
\begin{align*}
& \frac{\nu_k^{+}(\z_{-k})}{\nu_k(\tilde{\z}_{-k})}\overline{Y}(\z_{-k}, z_{k}^{+}|D_k(\z_{-k}, z_{k}^{+}) = +1) -\frac{\nu_k^{-}(\z_{-k})}{\nu_k(\tilde{\z}_{-k})}\overline{Y}_{a(\z_{-k})}(\z_{-k}, z_{k}^{-}) -  \frac{\nu_k(\z_{-k}) -\nu_k(\tilde{\z}_{-k}) }{\nu_k(\tilde{\z}_{-k})}\\
& \qquad -  \frac{ 1-\nu_k^{-}(\z_{-k})}{ \nu_k(\tilde{\z}_{-k})} \overline{Y}(\z_{-k}, z_{k}^{-}|D_k(\z_{-k}, z_{k}^{-}) = -1) + \frac{1 - \nu_k^{+}(\z_{-k})}{ \nu_k(\tilde{\z}_{-k})}\overline{Y}_{n(\z_{-k})}(\z_{-k}, z_{k}^{+})\\
&= \frac{1}{\nu_k(\tilde{\z}_{-k})}\overline{Y}(\z_{-k}, z_{k}^{+}) - \frac{ 1}{ \nu_k(\tilde{\z}_{-k})} \overline{Y}(\z_{-k}, z_{k}^{-}) -  \frac{\nu_k(\z_{-k}) -\nu_k(\tilde{\z}_{-k}) }{\nu_k(\tilde{\z}_{-k})}
\end{align*}
and upper bound as
\begin{align*}
&= \frac{\nu_k^{+}(\z_{-k})}{\nu_k(\tilde{\z}_{-k})}\overline{Y}(\z_{-k}, z_{k}^{+}|D_k(\z_{-k}, z_{k}^{+}) = +1) -\frac{\nu_k^{-}(\z_{-k})}{\nu_k(\tilde{\z}_{-k})}\overline{Y}_{a(\z_{-k})}(\z_{-k}, z_{k}^{-})\\
&\qquad - \frac{ 1-\nu_k^{-}(\z_{-k})}{\nu_k(\tilde{\z}_{-k})} \overline{Y}(\z_{-k}, z_{k}^{-}|D_k(\z_{-k}, z_{k}^{-}) = -1)+ \frac{1 - \nu_k^{+}(\z_{-k})}{\nu_k(\tilde{\z}_{-k})}\overline{Y}_{n(\z_{-k})}(\z_{-k}, z_{k}^{+})  + \frac{\nu_k(\z_{-k}) -\nu_k(\tilde{\z}_{-k}) }{\nu_k(\tilde{\z}_{-k})}\\
&= \frac{1}{\nu_k(\tilde{\z}_{-k})}\overline{Y}(\z_{-k}, z_{k}^{+}) - \frac{ 1}{\nu_k(\tilde{\z}_{-k})} \overline{Y}(\z_{-k}, z_{k}^{-})+ \frac{\nu_k(\z_{-k}) -\nu_k(\tilde{\z}_{-k}) }{\nu_k(\tilde{\z}_{-k})}.
\end{align*}

Either bounds can then be aggregated across the assignment of other factors, $\z_{-k}$, weighting by the total number of such assignments, $2^{-K}$.
 \end{proof}

\subsection{Proof of Remark~\ref{remark:conservative} }
\begin{proof}
First, we note that under Assumption~\ref{assump:weak_treat_ex}, because noncompliers have 0 effect of treatment assignment and all other units have effects bounded between $[-1,1]$,
\[-\sum_{\z_{-k} \in \mathcal{Z}_{-k}} \nu(\z_{-k})\leq \bm{g}_k^T\bm{Y} =\sum_{\z_{-k}}[\overline{Y}(\z_{-k}, z_{k}^{+}) -\overline{Y}(\z_{-k}, z_{k}^{-})] \leq \sum_{\z_{-k} \in \mathcal{Z}_{-k}} \nu(\z_{-k}) \]
Then, we can write the lower bound from Proposition~\ref{prop:weak_tr_ex} as
\begin{align*}
&\frac{1}{\rho_{c_{k}}}\left[\frac{1}{2^{K-1}}\bm{g}_k^T\bm{Y} - \frac{1}{2^{K-1}}\sum_{\z_{-k} \in \mathcal{Z}_{-k}} \nu(\z_{-k}) + \rho_{c_{k}}\right] - \frac{1}{t_k}\left[\frac{1}{2^{K-1}}\bm{g}_k^T\bm{Y} -  \frac{1}{2^{K-1}}\sum_{\z_{-k} \in \mathcal{Z}_{-k}} \nu(\z_{-k}) + t_k\right]\\
&= \frac{1}{\rho_{c_{k}}}\left[\frac{1}{2^{K-1}}\bm{g}_k^T\bm{Y} -  \frac{1}{2^{K-1}}\sum_{\z_{-k} \in \mathcal{Z}_{-k}} \nu(\z_{-k}) \right] - \frac{1}{t_k}\left[\frac{1}{2^{K-1}}\bm{g}_k^T\bm{Y} -  \frac{1}{2^{K-1}}\sum_{\z_{-k} \in \mathcal{Z}_{-k}} \nu(\z_{-k}) \right]\\
&=\underbrace{ \frac{t_k - \rho_{c_{k}}}{\rho_{c_{k}}t_k}}_{\leq 0}\underbrace{\left[\frac{1}{2^{K-1}}\bm{g}_k^T\bm{Y} -  \frac{1}{2^{K-1}}\sum_{\z_{-k} \in \mathcal{Z}_{-k}} \nu(\z_{-k}) \right] }_{\leq 0}\\
& \geq 0
\end{align*}
So the lower bound using $t_k$ is less than the lower bound using $\rho_{c_{k}}$.

Now consider the upper bound.
\begin{align*}
&\frac{1}{\rho_{c_{k}}}\left[\frac{1}{2^{K-1}}\bm{g}_k^T\bm{Y} +  \frac{1}{2^{K-1}}\sum_{\z_{-k} \in \mathcal{Z}_{-k}} \nu(\z_{-k}) - \rho_{c_{k}}\right] - \frac{1}{t_k}\left[\frac{1}{2^{K-1}}\bm{g}_k^T\bm{Y} +  \frac{1}{2^{K-1}}\sum_{\z_{-k} \in \mathcal{Z}_{-k}} \nu(\z_{-k}) - t_k\right]\\
&= \frac{1}{\rho_{c_{k}}}\left[\frac{1}{2^{K-1}}\bm{g}_k^T\bm{Y} +  \frac{1}{2^{K-1}}\sum_{\z_{-k} \in \mathcal{Z}_{-k}} \nu(\z_{-k}) \right] - \frac{1}{t_k}\left[\frac{1}{2^{K-1}}\bm{g}_k^T\bm{Y} + \frac{1}{2^K} \frac{1}{2^{K-1}}\sum_{\z_{-k} \in \mathcal{Z}_{-k}} \nu(\z_{-k}) \right]\\
&= \underbrace{\frac{t_k - \rho_{c_{k}}}{\rho_{c_{k}}t_k}}_{\leq 0} \underbrace{\left[\frac{1}{2^{K-1}}\bm{g}_k^T\bm{Y} +  \frac{1}{2^{K-1}}\sum_{\z_{-k} \in \mathcal{Z}_{-k}} \nu(\z_{-k}) \right] }_{\geq 0}\\
& \leq 0
\end{align*}
So the upper bound using $t_k$ is greater than the upper bound using $\rho_{c_{k}}$.
\end{proof}

\section{Additional interaction results}\label{append:int}
In Section~\ref{sec:int}, we consider estimating interactive effects for constant compliers on a single factor $k$.
However, as interactions inherently involve multiple factors, we may instead be interested in estimating the effect among units who are constant compliers on multiple factors.
We focus on two-factor interactions.
Thus, if we are estimating the interaction between factor $k$ and $\kp$, we are interested in identifying the effect among individuals who are constant compliers on both $k$ and $k'$.
This was addressed by \cite{blackwell2021noncompliance} under the treatment exclusion restriction assumption on all factors.
We introduce here two weaker assumptions that can be used to obtain bounds.

Throughout this section, we will use notation $\z_{-(k, \kp)}$ to indicate the assignment vector $\z$ with the assignments for factor $k$ and $\kp$ removed.
We define $D_{i,k\circ\kp}(\z) = D_{i,k}(\z)D_{i,\kp}(\z)$.
Similarly, $\bm{g}_{k\circ\kp}$ is the contrast vector for the two-factor interaction between factors $k$ and $\kp$, which is defined by the elemnt-wise (Hadamard) product between $\bm{g}_{k}$ and $\bm{g}_{\kp}$.
Also let $c_{k,\kp}$ be the subscript denoting constant compliers on factors $k$ and $\kp$ so that, e.g., $\rho_{c_{k,\kp}}$ is the proportion of units who are constant compliers on both  $k$ and $\kp$.
The effect we wish to estimate is then
  \begin{align*}
    \delta_{k\kp}& \equiv \frac{1}{2^{K-1}} \sum_{\z_{-(k, \kp)}}\Big(\overline{Y}_{c_{k\kp}}(\z_{-(k, \kp)}, z_{k}^{+}, z_{\kp}^{+}) - \overline{Y}_{c_{k\kp}}(\z_{-(k, \kp)}, z_{k}^{-}, z_{\kp}^{+}) \\
    &\qquad \qquad \qquad \qquad - \left[ \overline{Y}_{c_{k\kp}}(\z_{-(k, \kp)}, z_{k}^{+}, z_{\kp}^{-}) - \overline{Y}_{c_{k\kp}}(\z_{-(k, \kp)}, z_{k}^{-}, z_{\kp}^{-})\right]\Big).
    \end{align*}

Further, define the interactive effect of treatment assignment on uptake as
  \begin{align*}
   \nu_{k, \kp}(\tilde{\z}_{-(k, \kp)})&=
\frac{1}{4}\Big(\overline{D}_{k\circ\kp}(\z_{-(k, \kp)}, z_{k}^{+}, z_{\kp}^{+}) - \overline{D}_{k\circ\kp}(\z_{-(k, \kp)}, z_{k}^{-}, z_{\kp}^{+})\\
&\qquad \qquad - \left[ \overline{D}_{k\circ\kp}(\z_{-(k, \kp)}, z_{k}^{+}, z_{\kp}^{-}) - \overline{D}_{k\circ\kp}(\z_{-(k, \kp)}, z_{k}^{-}, z_{\kp}^{-})\right]\Big)
 \end{align*}

To provide identification results for this effect, first, we need the following assumption which is stronger than Assumption~\ref{assump:order_comp}.
  \begin{assumption}[Least compliant profile for factors $k$ and $k'$]\label{assump:order_comp_joint}
There is a $\tilde{\z}_{-(k,\kp)} \in \mathcal{Z}_{-(k,\kp)}$, such that 
\begin{align*}
&D_{i,k\circ\kp}(\z_{-(k, \kp)}, z_{k}^{+}, z_{\kp}^{+}) - D_{i,k\circ\kp}(\z_{-(k, \kp)}, z_{k}^{-}, z_{\kp}^{+}) - \left[ D_{i,k\circ\kp}(\z_{-(k, \kp)}, z_{k}^{+}, z_{\kp}^{-}) - D_{i,k\circ\kp}(\z_{-(k, \kp)}, z_{k}^{-}, z_{\kp}^{-})\right]\\
& \geq D_{i,k\circ\kp}(\tilde{\z}_{-(k, \kp)}, z_{k}^{+}, z_{\kp}^{+}) - D_{i,k\circ\kp}(\tilde{\z}_{-(k, \kp)}, z_{k}^{-}, z_{\kp}^{+}) - \left[ D_{i,k\circ\kp}(\tilde{\z}_{-(k, \kp)}, z_{k}^{+}, z_{\kp}^{-}) - D_{i,k\circ\kp}(\tilde{\z}_{-(k, \kp)}, z_{k}^{-}, z_{\kp}^{-})\right]
\end{align*}
 and for all $i$.
\end{assumption}

Second, we need the following assumption that controls the interactive effect of assignment on compliance between the two factors under consideration.
\begin{assumption}[Conditional treatment exclusion restriction between factors $k$ and $\kp$]\label{assump:strong_cond_joint}
There is a conditional treatment exclusion between factors $k$ and $\kp$ if
\[D_{i,k}(\z_{-(k, \kp)}, z_{k} = z, z_{\kp}^{+}) = D_{i,k}(\z_{-(k, \kp)}, z_{k} = z, z_{\kp}^{-})\]
and 
\[D_{i,\kp}(\z_{-(k, \kp)}, z_{k}^+ z, z_{\kp}=z) = D_{i,\kp}(\z_{-(k, \kp)}, z_{k}^-, z_{\kp} = z)\]
for all $z \in \{-1, +1\}$, $\z_{-(k, \kp)}$, and $i$.
In other words, conditional on the assignment of other factors, the uptake on factor $k$ does not depend on the assignment of factor $\kp$ and vice versa. 
\end{assumption}
Assumption~\ref{assump:strong_cond_joint}  is weaker than the treatment exclusion restriction assumption of \cite{blackwell2021noncompliance} because it is (i) only required on the two factors under consideration (ii) conditional and thus allows compliance to change across assignments for the other $K-2$ factors.

\begin{proposition}\label{prop:append_int}
Under Assumptions~\ref{assump:er}, \ref{assump:mono}, \ref{assump:weak_treat_ex}, \ref{assump:order_comp_joint}, and \ref{assump:strong_cond_joint},  $\delta_{k\kp} \in  [\widetilde{\delta}_{k, \kp}-b_{k,\kp},\widetilde{\delta}_{k, \kp} + b_{k,\kp}]  \cap [-1,1]$ where
\[\widetilde{\delta}_{k, \kp} = \frac{ \frac{1}{2^{K-1}}\bm{g}_{k\circ\kp}^{T}\bm{Y}}{\nu_{k, \kp}(\tilde{\z}_{-(k, \kp)})}\]
and
\[b_{k,\kp} = \frac{1}{\nu_{k, \kp}(\tilde{\z}_{-(k, \kp)})} \left[\frac{1}{2^{K}}\bm{g}_{k\circ\kp}^{T}\bm{D}_{k\kp}- \nu_{k, \kp}(\tilde{\z}_{-(k, \kp)})\right]\]
 
\end{proposition}

\begin{proof}
We start by showing identification of the proportion of constant compliers for factors $k$ and $\kp$ using Assumptions~\ref{assump:mono}, \ref{assump:order_comp_joint}, and \ref{assump:strong_cond_joint}.
We have the conditional unit-level interactive effect for factor $k$ and $\kp$ on uptake is
\begin{align*}
&D_{i,k\circ\kp}(\z_{-(k, \kp)}, z_{k}^{+}, z_{\kp}^{+}) - D_{i,k\circ\kp}(\z_{-(k, \kp)}, z_{k}^{-}, z_{\kp}^{+}) - \left[ D_{i,k\circ\kp}(\z_{-(k, \kp)}, z_{k}^{+}, z_{\kp}^{-}) - D_{i,k\circ\kp}(\z_{-(k, \kp)}, z_{k}^{-}, z_{\kp}^{-})\right]\\
&=D_{i,k}(\z_{-(k, \kp)}, z_{k}^{+}, z_{\kp}^{+})D_{i,\kp}(\z_{-(k, \kp)}, z_{k}^{+}, z_{\kp}^{+}) - D_{i,k}(\z_{-(k, \kp)}, z_{k}^{-}, z_{\kp}^{+}) D_{i,k\kp}(\z_{-(k, \kp)}, z_{k}^{-}, z_{\kp}^{+})\\
&\quad - \left[ D_{i,k}(\z_{-(k, \kp)}, z_{k}^{+}, z_{\kp}^{-})D_{i,\kp}(\z_{-(k, \kp)}, z_{k}^{+}, z_{\kp}^{-}) - D_{i,k}(\z_{-(k, \kp)}, z_{k}^{-}, z_{\kp}^{-})D_{i,\kp}(\z_{-(k, \kp)}, z_{k}^{-}, z_{\kp}^{-})\right]\\
&=D_{i,k}(\z_{-(k, \kp)}, z_{k}^{+}, z_{\kp}^{+})\left[D_{i,\kp}(\z_{-(k, \kp)}, z_{k}^{+}, z_{\kp}^{+}) -D_{i,\kp}(\z_{-(k, \kp)}, z_{k}^{+}, z_{\kp}^{-})\right]\\
&\quad- D_{i,k}(\z_{-(k, \kp)}, z_{k}^{-}, z_{\kp}^{+}) \left[D_{i,\kp}(\z_{-(k, \kp)}, z_{k}^{-}, z_{\kp}^{+})-D_{i,\kp}(\z_{-(k, \kp)}, z_{k}^{-}, z_{\kp}^{-})\right]\\
&=\left[D_{i,k}(\z_{-(k, \kp)}, z_{k}^{+}, z_{\kp}^{+})- D_{i,k}(\z_{-(k, \kp)}, z_{k}^{-}, z_{\kp}^{+})\right]\left[D_{i,\kp}(\z_{-(k, \kp)}, z_{k}^{+}, z_{\kp}^{+}) -D_{i,\kp}(\z_{-(k, \kp)}, z_{k}^{+}, z_{\kp}^{-})\right]\\
&=4C_{i,k, \kp} + 4CC_{i,k, \kp}(\z_{-(k, \kp)}),
\end{align*}
where $C_{i,k, \kp}$ is the indicator that unit $i$ is a constant complier and $CC_{i,k, \kp}(\z_{-(k, \kp)})$ is the indicators that unit $i$ is a conditional complier, conditional on $\z_{-(k, \kp)}$, for factors $k$ and $\kp$.
The second equality comes from the definition of $D_{i,k\circ\kp}(\z)$, the third and fourth equalities come from Assumption~\ref{assump:strong_cond_joint}, and the last follows from Assumption~\ref{assump:mono}, and the fact that only conditional compliers and constant compliers on both factors will have 
\[\left[D_{i,k}(\z_{-(k, \kp)}, z_{k}^{+}, z_{\kp}^{+})- D_{i,k}(\z_{-(k, \kp)}, z_{k}^{-}, z_{\kp}^{+})\right]\left[D_{i,\kp}(\z_{-(k, \kp)}, z_{k}^{+}, z_{\kp}^{+}) -D_{i,\kp}(\z_{-(k, \kp)}, z_{k}^{+}, z_{\kp}^{-})\right] = 4\]
 (and conditional noncompliers will have it equal to 0).
 
 Thus, we must have the average conditional interactive effect for factor $k$ and $\kp$ on uptake, $\nu_{k, \kp}(\z_{-(k, \kp)})$, is
 \begin{align*}
 &4\nu_{k, \kp}(\z_{-(k, \kp)})=\\
&\overline{D}_{k\circ\kp}(\z_{-(k, \kp)}, z_{k}^{+}, z_{\kp}^{+}) - \overline{D}_{k\circ\kp}(\z_{-(k, \kp)}, z_{k}^{-}, z_{\kp}^{+}) - \left[ \overline{D}_{k\circ\kp}(\z_{-(k, \kp)}, z_{k}^{+}, z_{\kp}^{-}) - \overline{D}_{k\circ\kp}(\z_{-(k, \kp)}, z_{k}^{-}, z_{\kp}^{-})\right]\\
&=4\rho_{c_{k,\kp}} + 4\rho_{cc(\z_{-(k, \kp)})}.
\end{align*}

Further, from Assumption~\ref{assump:order_comp_joint}, we must have $\nu_{k, \kp}(\tilde{\z}_{-(k, \kp)}) = \rho_{c_{k,\kp}}$.

Next we show the bounds.
 We have the conditional intent-to-treat interactive effect between factors $k$ and $\kp$ can be decomposed as
  \begin{align*}
   & \gamma_{k,\kp}(\z_{-(k,\kp)})\\
     &\equiv \overline{Y}(\z_{-(k,\kp)}, z_{k}^{+}, z_{\kp}^{+}) - \overline{Y}(\z_{-(k,\kp)}, z_{k}^{-}, z_{\kp}^{+}) - \left[\overline{Y}(\z_{-(k,\kp)}, z_{k}^{+}, z_{\kp}^{-}) - \overline{Y}(\z_{-(k,\kp)}, z_{k}^{-}, z_{\kp}^{-})\right] \\
                      &= \rho_{c_{k, \kp}}\left[\overline{Y}_{c_{k, \kp}}(\z_{-(k,\kp)}, z_{k}^{+}, z_{\kp}^{+}) - \overline{Y}_{c_{k, \kp}}(\z_{-(k,\kp)}, z_{k}^{-}, z_{\kp}^{+}) - \left[\overline{Y}_{c_{k, \kp}}(\z_{-(k,\kp)}, z_{k}^{+}, z_{\kp}^{-}) - \overline{Y}_{c_{k, \kp}}(\z_{-(k,\kp)}, z_{k}^{-}, z_{\kp}^{-})\right]\right] \\ 
                      &\quad + \rho_{cc(\z_{-(k,\kp)})}\Big[\overline{Y}_{cc(\z_{-(k,\kp)})}(\z_{-(k,\kp)}, z_{k}^{+}, z_{\kp}^{+}) - \overline{Y}_{cc(\z_{-(k,\kp)})}(\z_{-(k,\kp)}, z_{k}^{-}, z_{\kp}^{+}) \\
                      &\qquad \qquad  \qquad  \qquad - \left[\overline{Y}_{cc(\z_{-(k,\kp)})}(\z_{-(k,\kp)}, z_{k}^{+}, z_{\kp}^{-}) - \overline{Y}_{cc(\z_{-(k,\kp)})}(\z_{-(k,\kp)}, z_{k}^{-}, z_{\kp}^{-})\right]\Big] \\
    & \quad  + \rho_{cn(\z_{-(k,\kp)})}\Big[\overline{Y}_{cn(\z_{-(k,\kp)})}(\z_{-(k,\kp)}, z_{k}^{+}, z_{\kp}^{+}) - \overline{Y}_{cn(\z_{-(k,\kp)})}(\z_{-(k,\kp)}, z_{k}^{-}, z_{\kp}^{+}) \\
                      &\qquad \qquad  \qquad  \qquad - \left[\overline{Y}_{cn(\z_{-(k,\kp)})}(\z_{-(k,\kp)}, z_{k}^{+}, z_{\kp}^{-}) - \overline{Y}_{cn(\z_{-(k,\kp)})}(\z_{-(k,\kp)}, z_{k}^{-}, z_{\kp}^{-})\right]\Big].
  \end{align*}
  Consider a unit $i$ who is a conditional noncomplier on either $k$ or $\kp$, conditional on $\z_{-(k,\kp)}$. 
  Without loss of generality, take unit $i$ to be a noncomplier on factor $k$.
  Then under Assumptions~\ref{assump:er}, and \ref{assump:mono}, and \ref{assump:strong_cond_joint},
    \begin{align*}
   & Y_{i}(\z_{-(k,\kp)}, z_{k}^{+}, z_{\kp}^{+}) - Y_{i}(\z_{-(k,\kp)}, z_{k}^{-}, z_{\kp}^{+}) - \left[Y_{i}(\z_{-(k,\kp)}, z_{k}^{+}, z_{\kp}^{-}) - Y_{i}(\z_{-(k,\kp)}, z_{k}^{-}, z_{\kp}^{-})\right]\\
    &=0.
    \end{align*}
    This follows because under Assumptions~\ref{assump:mono} and \ref{assump:strong_cond_joint}, for this unit $D_{ik}(\z_{-(k,\kp)}, z_{k}^{+}, z_{\kp}) = D_{ik}(\z_{-(k,\kp)}, z_{k}^{+}, z_{\kp})$ for $z_{\kp} \in \{z_{\kp}^{+}, z_{\kp}^{-}\}$.
Then under Assumption~\ref{assump:weak_treat_ex}, we must have $Y_{i}(\z_{-(k,\kp)}, z_{k}^{+}, z_{\kp}^{+}) - Y_{i}(\z_{-(k,\kp)}, z_{k}^{-}, z_{\kp}^{+}) = 0$ and $Y_{i}(\z_{-(k,\kp)}, z_{k}^{+}, z_{\kp}^{-}) - Y_{i}(\z_{-(k,\kp)}, z_{k}^{-}, z_{\kp}^{-}) = 0$.
Thus, we have
\begin{align*}
   & \gamma_{k,\kp}(\z_{-(k,\kp)})\\
                       &= \rho_{c_{k, \kp}}\left[\overline{Y}_{c_{k, \kp}}(\z_{-(k,\kp)}, z_{k}^{+}, z_{\kp}^{+}) - \overline{Y}_{c_{k, \kp}}(\z_{-(k,\kp)}, z_{k}^{-}, z_{\kp}^{+}) - \left[\overline{Y}_{c_{k, \kp}}(\z_{-(k,\kp)}, z_{k}^{+}, z_{\kp}^{-}) - \overline{Y}_{c_{k, \kp}}(\z_{-(k,\kp)}, z_{k}^{-}, z_{\kp}^{-})\right]\right] \\ 
                     &\quad + \rho_{cc(\z_{-(k,\kp)})}\Big[\overline{Y}_{cc(\z_{-(k,\kp)})}(\z_{-(k,\kp)}, z_{k}^{+}, z_{\kp}^{+}) - \overline{Y}_{cc(\z_{-(k,\kp)})}(\z_{-(k,\kp)}, z_{k}^{-}, z_{\kp}^{+}) \\
                     &\qquad \qquad  \qquad  \qquad - \left[\overline{Y}_{cc(\z_{-(k,\kp)})}(\z_{-(k,\kp)}, z_{k}^{+}, z_{\kp}^{-}) - \overline{Y}_{cc(\z_{-(k,\kp)})}(\z_{-(k,\kp)}, z_{k}^{-}, z_{\kp}^{-})\right]\Big].
  \end{align*}
  Rearranging and bounding the conditional complier term gives us
  \begin{align*}
                      &  \rho_{c_{k, \kp}}\left[\overline{Y}_{c_{k, \kp}}(\z_{-(k,\kp)}, z_{k}^{+}, z_{\kp}^{+}) - \overline{Y}_{c_{k, \kp}}(\z_{-(k,\kp)}, z_{k}^{-}, z_{\kp}^{+}) - \left[\overline{Y}_{c_{k, \kp}}(\z_{-(k,\kp)}, z_{k}^{+}, z_{\kp}^{-}) - \overline{Y}_{c_{k, \kp}}(\z_{-(k,\kp)}, z_{k}^{-}, z_{\kp}^{-})\right]\right] \\ 
                     &= \gamma_{k,\kp}(\z_{-(k,\kp)}) - \rho_{cc(\z_{-(k,\kp)})}\Big[\overline{Y}_{cc(\z_{-(k,\kp)})}(\z_{-(k,\kp)}, z_{k}^{+}, z_{\kp}^{+}) - \overline{Y}_{cc(\z_{-(k,\kp)})}(\z_{-(k,\kp)}, z_{k}^{-}, z_{\kp}^{+}) \\
                     &\qquad \qquad  \qquad  \qquad \qquad  \qquad - \left[\overline{Y}_{cc(\z_{-(k,\kp)})}(\z_{-(k,\kp)}, z_{k}^{+}, z_{\kp}^{-}) - \overline{Y}_{cc(\z_{-(k,\kp)})}(\z_{-(k,\kp)}, z_{k}^{-}, z_{\kp}^{-})\right]\Big]\\
                     &\in \left(\gamma_{k,\kp}(\z_{-(k,\kp)}) -2\rho_{cc(\z_{-(k,\kp)})}, \gamma_{k,\kp}(\z_{-(k,\kp)}) +2\rho_{cc(\z_{-(k,\kp)})}  \right)\\
                     &= \left(\gamma_{k,\kp}(\z_{-(k,\kp)}) -2[\nu_{k, \kp}(\z_{-(k, \kp)}) - \nu_{k, \kp}(\tilde{\z}_{-(k, \kp)})], \gamma_{k,\kp}(\z_{-(k,\kp)}) +2[\nu_{k, \kp}(\z_{-(k, \kp)}) - \nu_{k, \kp}(\tilde{\z}_{-(k, \kp)})]\right).
  \end{align*}
 Aggregating across $\z_{-k, \kp}$, note that 
 \[\frac{1}{2^{K-1}}\bm{g}_{k\circ\kp}^{T}\bm{Y}=\frac{1}{2^{K-1}}\sum_{\z_{-(k,\kp)}}\gamma_{k,\kp}(\z_{-(k,\kp)})\]
 and
 \[\frac{1}{2^{K}}\bm{g}_{k\circ\kp}^{T}\bm{D}_{k\kp}=\frac{1}{2^{K-2}}\sum_{\z_{-(k,\kp)}}\nu_{k, \kp}(\z_{-(k, \kp)}).\]
 So then we have the result
  \begin{align*}
    \delta_{k\kp}
  &\in\Bigg( \max \left\{0, \frac{1}{\rho_{c_{k, \kp}}} \left[\frac{1}{2^{K-1}}\bm{g}_{k\circ\kp}^{T}\bm{Y} -\frac{1}{2^{K}}\bm{g}_{k\circ\kp}^{T}\bm{D}_{k\kp}+ \nu_{k, \kp}(\tilde{\z}_{-(k, \kp)})\right] \right\}, \\
  &\qquad
  \min \left\{1, \frac{1}{\rho_{c_{k, \kp}}} \left[\frac{1}{2^{K-1}}\bm{g}_{k\circ\kp}^{T}\bm{Y} +\frac{1}{2^{K}}\bm{g}_{k\circ\kp}^{T}\bm{D}_{k\kp}- \nu_{k, \kp}(\tilde{\z}_{-(k, \kp)})\right] \right\}\Bigg).
  \end{align*}
\end{proof}

\subsection{Proof of bounds on interaction}
We prove the bounds for interactions involving factor $k$ under Assumptions~\ref{assump:cr}, \ref{assump:er}, \ref{assump:mono}, \ref{assump:order_comp}, and~\ref{assump:weak_treat_ex}.
\begin{proof}
Define $\tilde{\bm{g}}_{f(k)}(\z) \in \{-1, 1\}$ give the sign of $\bm{g}_{f(k)}$ for each treatment $\z$.
We have $\delta_{f(k)}$ as 
   \begin{align*}
     \delta_{f(k)} =& \frac{1}{2^{K-1}\nu_k(\tilde{\z}_{-k})} \Bigg[ \bm{g}_{f(k)}^{T}\bm{Y} \\
   & \hspace{2.8cm}  - \sum_{\z_{-k} \in \mathcal{Z}_{-k}} \tilde{\bm{g}}_{f(k)}(z^{+}_k, \z_{-k})(\nu_k(\z_{-k}) - \nu_k(\tilde{\z}_{-k}))\left[\overline{Y}_{cc(\z_{-k})}(\z_{-k}, z_{k}^{+}) - \overline{Y}_{cc(\z_{-k})}(\z_{-k}, z_{k}^{-})\right]
  \\
 & \hspace{2.8cm} - \sum_{\z_{-k} \in \mathcal{Z}_{-k}} \tilde{\bm{g}}_{f(k)}(z^{+}_k, \z_{-k})(1 - \nu_k(\z_{-k})) \left[\overline{Y}_{cn(\z_{-k})}(\z_{-k}, z_{k}^{+}) - \overline{Y}_{cn(\z_{-k})}(\z_{-k}, z_{k}^{-})\right] \Bigg]\\
  =& \frac{1}{2^{K-1}\nu_k(\tilde{\z}_{-k})} \Bigg[ \bm{g}_{f(k)}^{T}\bm{Y} \\
   & \hspace{2.8cm}  - \sum_{\z_{-k} \in \mathcal{Z}_{-k}} \tilde{\bm{g}}_{f(k)}(z^{+}_k, \z_{-k})(\nu_k(\z_{-k}) - \nu_k(\tilde{\z}_{-k}))\left[\overline{Y}_{cc(\z_{-k})}(\z_{-k}, z_{k}^{+}) - \overline{Y}_{cc(\z_{-k})}(\z_{-k}, z_{k}^{-})\right].
 \end{align*}
 As in the proof of Proposition~\ref{prop:weak_tr_ex}, the last line holds because under the assumptions, in particular Assumption~\ref{assump:weak_treat_ex}, the effect among conditional noncompliers is 0.
We then bound the effect of conditional compliers in the interval $[-1,1]$.
\end{proof}

\section{Sharp bounds}\label{append:sharp_bounds}

In special cases, it is possible to derive sharp bounds for conditional effects of a factor for constant compliers. In this appendix, we focus on the case with a binary outcome, where we can apply the linear programming approach to sharp bounds derived by \citet{BalPea97}. In this approach, we treat the joint distribution of the principal strata for compliance and the outcome as a set of unknown parameters and view the observational distribution of the data as a set of constraints. We can then find the values of the unknown principal strata parameters that maximize or minimize the quantity of interest subject to these constraints and any other constraints imposed by the maintained assumptions. \citet{BlaBroHil25} describes this approach more fully. 

For ease of exposition, we adopt a superpopulation approach to illustrate these bounds and focus on the $K=2$ case. Define the following transformations of the treatment assignment and uptake vectors:
$$
\begin{aligned}
  D^{\ast}_{i} = \begin{cases}
    1 &\text{if } \bm{D}_{i} = (1, 1) \\
    2 &\text{if } \bm{D}_{i} = (-1, 1) \\
    3 &\text{if } \bm{D}_{i} = (1, -1) \\
    4 &\text{if } \bm{D}_{i} = (-1, -1) \\
  \end{cases},
  \qquad
    Z^{\ast}_{i} = \begin{cases}
    1 &\text{if } \bm{Z}_{i} = (1, 1) \\
    2 &\text{if } \bm{Z}_{i} = (-1, 1) \\
    3 &\text{if } \bm{Z}_{i} = (1, -1) \\
    4 &\text{if } \bm{Z}_{i} = (-1, -1) \\
  \end{cases}.
\end{aligned}
$$ 

With these in hand, we can describe the distribution of the observed data as
$$
p_{yd|z} = \P(Y_i = y, D^{\ast}_i = d \mid Z^{\ast}_i = z).
$$
The sharp bounds for the factorial effect for the constant compliers under all the assumption of the main text and one-sided noncompliance would be:
\begin{align}
  \delta_{k} &\leq \frac{1}{2}\min \left\{\begin{matrix}
  p_{11|1} + p_{12|1} + p_{13|1} - p_{03|1} + p_{14|1} - p_{12|2} - p_{14|2} + p_{13|3} + p_{03|3} + p_{14|3} - p_{14|4} \\
  2p_{11|1} + p_{01|1} + p_{12|1} + 2p_{13|1} + 2p_{14|1} - p_{12|2} - 2p_{14|2} \\  
p_{11|1} + p_{12|1} + p_{13|1} + 2p_{14|1} - p_{12|2} - 2p_{14|2} + p_{13|3} \\
-p_{01|1} + p_{12|1} - p_{03|1} + p_{14|1} - p_{12|2} - p_{14|2} + 2p_{13|3} + p_{03|3} + p_{14|3} - p_{14|4}
  \end{matrix}
  \right\} \\
  \delta_{k} & \geq \frac{1}{2}\max \left\{\begin{matrix}
    p_{11|1} + 2p_{12|1} + p_{02|1} + 2p_{13|1} + 2p_{14|1} - 2p_{12|2} - p_{02|2} - 2p_{14|2}\\                     
2p_{11|1} + p_{01|1} + 2p_{12|1} + p_{02|1} + 2p_{13|1} + p_{03|1} + 2p_{14|1} - 2p_{12|2} - p_{02|2} - 2p_{14|2} - p_{03|3} \\ 
    p_{11|1} + p_{12|1} + 2p_{13|1} + p_{14|1} - p_{12|2} - p_{14|2} + p_{14|3} - p_{14|4} \\
2p_{11|1} + p_{01|1} + p_{12|1} + 2p_{13|1} + p_{03|1} + p_{14|1} - p_{12|2} - p_{14|2} - p_{03|3} + p_{14|3} - p_{14|4} 
  \end{matrix}
  \right\}.
\end{align}

Because the linear programming approach to deriving these bounds uses all the restriction implied by the maintained assumptions, these bounds are sharp. We can also derive bounds for the two-sided noncompliance case, though these are considerably more complicated. To estimate these bounds, one can plug in in the sample averages for these population quantities and use the approach of \citet{ImbMan04} to obtain confidence intervals. 

\section{Simulation}\label{append:simulation}

We now present a simulation study to illustrate the features of our bounds. We focus on designs with $K=2$ and $K=5$ to understand how increasing the number of factors affects the bounds. In our setting, there is only noncompliance on the first factor, $k=1$ and we are interested in the factorial effect of this factor.  We assume there are three possible compliance types:
\begin{itemize}
  \item perfect compliers ($c_{1}$) for factor 1: $D_{i1}(z_{1}, \z_{-1}) = z_{1}$
  \item never-takers for the ``most active'' profile, compliers for all other profiles $cc_{1}$: $D_{i1}(z_{1}, \mathbf{1}^{K-1}) = -1$ and $D_{i1}(z_{1}, \z_{-1}) = z_{1}$ for all $\z_{-1} \neq \mathbf{1}^{K-1}$
  \item never-takers for the top 2 ``most active'' profiles ($cc_{2}$): $D_{i1}(z_{1}, \mathbf{1}^{K-1}) = -1$, $D_{i1}(z_{1}, -1, \mathbf{1}^{K-2}) = -1$,  and $D_{i1}(z_{1}, \z_{-1}) = z_{1}$ for all $\z_{-1} \notin \{\mathbf{1}^{K-1}, (-1, \mathbf{1}^{K-2})\}$  
\end{itemize}
In this case, the least compliant profile is $\widetilde{\z}_{-k} = \mathbf{1}^{K-1}$. To see how the bounds vary as a function of compliance, we vary the proportion of perfect compliers $\rho_{c_{1}} \in \{0.2, 0.4, 0.8, 0.9\}$. We consider two ways to distribute the noncompliance between the two conditional complier groups:
\begin{itemize}
  \item Scenario 1: only noncompliance on $cc_{1}$: $\rho_{cc_{1}} = 1 - \rho_{c_{1}}$, $\rho_{cc_{2}} = 0$.
  \item Scenario 2: two noncompliant profiles, with one only marginally larger than the other $\rho_{cc_{1}} = 0.01$, $\rho_{cc_{2}} = 1 - 0.01 - \rho_{c_{1}}$.
\end{itemize}
The first setting has only one noncompliant profile and so it is clear how to select it in deriving the bounds. The second is more ambiguous, since the least compliant group is only marginally less compliant than the next least compliant profile. In different samples, either one might be the observed least compliant, allowing us to test the bounding approach in Remark~\ref{remark:conservative} from the main text, where we empirically estimate the least compliant profile.

Our data generating process is $\P(Y_{i}(\z) = 1)  = 0.05 + 0.2 \times \mathbb{I}(z_{1} = 1)\times C_{i1}$, so that the factorial effect of factor 1 is 0.2 for the perfect compliers and all other effects are 0. We sample the compliance types with above probabilities, conduct a complete randomization for treatment assignment, $\bm{Z}_{i}$, and use the compliance types to generate treatment uptake, $\bm{D}_{i}$. We use a sample size $N=1600$ and calculate the average upper and lower bounds along with the average width across 1,000 simulation iterations.

\begin{table}[t]
  \centering
\caption{Scenario 1 simulation results with only one noncompliant profile ($N=1600$) \label{tab:main_sim_results}}

\begin{tabular}{rrrrr}
\toprule
Num. Factors & Perfect Compliance Prob & Lower (avg.) & Upper (avg.) & Width (avg.)\\
\midrule
2 & 0.2 & -0.026 & 0.325 & 0.351\\
2 & 0.4 & 0.038 & 0.263 & 0.226\\
2 & 0.8 & 0.077 & 0.230 & 0.153\\
2 & 0.9 & 0.145 & 0.228 & 0.084\\
\addlinespace
5 & 0.2 & -0.243 & 0.454 & 0.696\\
5 & 0.4 & -0.108 & 0.326 & 0.433\\
5 & 0.8 & -0.006 & 0.261 & 0.267\\
5 & 0.9 & 0.094 & 0.248 & 0.154\\
\bottomrule
\end{tabular}
  \end{table}

The results of scenario 1 are in Table~\ref{tab:main_sim_results}. We find that the width and thus informativeness of the bounds shrink as we increase the proportion of perfect compliers. Conversely, as we increase the number of factors, we see an increase in the width of the bounds, even though we are focused on a single factor's effect.  In this data-generating process, we find that the bounds are informative (do not contain 0) when there are only when perfect compliance is at or above 40\% for 2 factors and when perfect compliance is above 80\% for 5 factors.

\begin{table}[t]
  \centering
\caption{Scenario 2 simulation results with two noncompliant profiles with similar sizes and either using the known least compliant profile or estimating it from the data ($K=5, N=1600$) \label{tab:ambig_results}}
\small

\begin{tabular}{rlrrr}
\toprule
Perfect Compliance Prob & Least Compliant Profile & Lower & Upper & Width\\
\midrule
0.2 & Known & -0.22 & 0.44 & 0.66\\
0.4 & Known & -0.09 & 0.31 & 0.40\\
0.8 & Known & 0.01 & 0.26 & 0.25\\
0.9 & Known & 0.10 & 0.24 & 0.14\\
\addlinespace
0.2 & Estimated & -0.26 & 0.50 & 0.76\\
0.4 & Estimated & -0.11 & 0.35 & 0.45\\
0.8 & Estimated & -0.01 & 0.27 & 0.28\\
0.9 & Estimated & 0.08 & 0.25 & 0.17\\
\bottomrule
\end{tabular}
  \end{table}
\input{}

The results for scenario 2 are in Table~\ref{tab:ambig_results} and they confirm the arguments of Remark~\ref{remark:conservative}. When the least compliant profile is estimated rather than known, the estimated bounds are 15--20\% wider than the bounds with the known least compliant profile, at least for this data-generating process. 

\section{Identification of interaction effects for constant compliers}\label{sec:int}
Any factorial effect involving factor $k$ can be written as a linear combination of the conditional effects $\overline{Y}(\z_{-k}, z_{k}^{+}) - \overline{Y}(\z_{-k}, z_{k}^{-})$.
Therefore, we can use the same identification strategy as above to estimate the interaction effect among constant compliers for factor $k$, for any interaction involving factor $k$.
In particular, consider estimating some interaction with factor $k$, called $\delta_{f(k)}$ which has contrast vector $\bm{g}_{f(k)}$, among constant compliers.
Following the logic above under assumptions~\ref{assump:cr}, \ref{assump:er}, \ref{assump:mono}, \ref{assump:order_comp}, and~\ref{assump:weak_treat_ex} for factor $k$, we have $\delta_{f(k)} \in[ \delta^{\prime}_{f(k)}-\tilde{b}_{k}, \delta^{\prime}_{f(k)} + \tilde{b}_{k}] \cap [-1, 1]$, where
    \begin{align*}      
    \delta^{\prime}_{f(k)} & =  \frac{1}{2^{K-1}\nu_k(\tilde{\z}_{-k})} \bm{g}_{f(k)}^{T}\bm{Y}  .
     \end{align*}

Here, we would end up with a different estimand for the same interaction for the constant compliers on factors involved which may not be desirable.
For example, if we are interested in the interaction between factor 1 and 2, we can use the above bound to get this effect among constant compliers on factor 1 or constant compliers on factor 2 -- these correspond to two different estimands given the two different groups.
Under the treatment exclusion restriction assumption of \cite{blackwell2021noncompliance}, one can point identify $\delta_{f(k)}$ and can in fact identify the effect among individuals who are constant compliers for all factors involved in the interaction or constant compliers across all factors in the experiment.
Partial identification of two-factor interactions among individuals who comply with both factors under weaker assumptions is discussed in Supplementary Materials B.
For this to hold, Assumptions~\ref{assump:mono}, \ref{assump:order_comp}, and~\ref{assump:weak_treat_ex} must hold simultaneously for all factors involved (the same as if constant complier main effects were desired for multiple factors).
This requires additional careful assessment of the assumptions and how this restricts behavior across factors.



\section{Estimation and inference}\label{append:est_inf}
The bounds in the previous sections are unobservable because they are functions of the potential outcomes. We can derive estimators for the bounds with a plug-in estimation approach that uses observed means.
\[
  \overline{Y}^{\text{obs}}(\z) = \frac{1}{N_{\z}} \sum_{i=1}^{N} \mathbb{I}(\bm{Z}_{i} = \z)Y_{i},  \qquad   \overline{D}_{k}^{\text{obs}}(\z) = \frac{1}{N_{\z}} \sum_{i=1}^{N} \mathbb{I}(\bm{Z}_{i} = \z)D_{ik}.
\]
Let $\widehat{\bm{Y}} = (\overline{Y}^{\text{obs}}(\z_{1}), \ldots, \overline{Y}^{\text{obs}}(\z_{J}))$ and let $\widehat{\nu}_{k}(\z_{-k}) = \overline{D}_{k}^{\text{obs}}(\z_{-k}, z_{k}^{+}) - \overline{D}_{k}^{\text{obs}}(\z_{-k}, z_{k}^{-})$. Then, a natural estimator for the bounds from Proposition~\ref{prop:weak_tr_ex} would be
\begin{align*}
       \widehat{\tilde{b}}_k & =  \frac{1}{2^{K-1}\widehat{\nu}_k(\tilde{\z}_{-k})}\left(  \sum_{\z_{-k} \in \mathcal{Z}_{-k}} (\widehat{\nu}_k(\z_{-k}) - \widehat{\nu}_k(\tilde{\z}_{-k}))\right), &\qquad 
    \widehat{\delta}^{\prime}_k & =  \frac{1}{2^{K-1}\widehat{\nu}_k(\tilde{\z}_{-k})} \bm{g}_{k}^{T}\widehat{\bm{Y}}.
\end{align*}
 
It is straightforward to obtain variance estimators for these quantities using the delta method, based on either the superpopulation or finite-population perspectives \citep[see, for example][]{Pashley19, blackwell2021noncompliance}. With these variance estimators, we can use the usual constructions to obtain confidence intervals for each end of the interval, though we can also obtain confidence intervals for the parameter of interest, $\delta_{k}$, rather than the intervals, as described by \cite{ImbMan04}. This latter procedure involves altering the critical values from the confidence intervals to account for how the true parameter cannot simultaneously be close to the upper and lower bound when they are far apart. Thus, when the bounds are sufficiently far apart, this allows us to construct $1-\alpha$ confidence intervals for $\delta_{k}$ using $\alpha$-level critical values on both the upper and lower bound, rather than the usual $\alpha/2$ values.

\end{appendices}
\fi

\end{document}